\newtheoremstyle{problemStyle}{\topsep}{\topsep}{\itshape}{0pt}{\bfseries}{:}{5pt plus 1pt minus 1pt}{\thmname{#1}\thmnumber{ #2} \thmnote{(#3)}}
\newtheoremstyle{noDotsStyle}{\topsep}{\topsep}{\itshape}{0pt}{\bfseries}{}{5pt plus 1pt minus 1pt}{\thmname{#1}\thmnumber{ #2} \thmnote{(#3)}}
\theoremstyle{problemStyle}
\newtheorem{problem}{Problem}
\newtheorem{lemma}{Lemma}
\newtheorem{proposition}{Proposition}
\theoremstyle{noDotsStyle}
\newtheorem{problemNoDots}[problem]{Problem}
\begin{document}

\title{Resource Allocation in MIMO Radar With Multiple Targets for Non-Coherent Localization}
\author{
	Nil~Garcia,
	Alexander~M.~Haimovich,~\IEEEmembership{Fellow,~IEEE},
	Martial~Coulon,
	and Marco~Lops,~\IEEEmembership{Senior~Member,~IEEE}%
	\thanks{
		This work was supported in part by the U.S.\ Air Force Office of Scientific Research under agreement FA9550-09-1-0303.
	}%
	\thanks{
		N.~Garcia is with the Electrical and Computer Engineering Department, New Jersey Institute of Technology, Newark, NJ 07102 USA, and the T\'{e}SA laboratory, 31000 Toulouse, France (e-mail: nil.garcia@njit.edu).
	}%
	\thanks{
		A.M.~Haimovich is with the Electrical and Computer Engineering Department, New Jersey Institute of Technology, Newark, NJ 07102 USA (e-mail: haimovich@njit.edu).
	}%
	\thanks{
		M.~Coulon is with the IRIT/INP-ENSEEIHT, University of Toulouse, BP 7122, 31071 Toulouse Cedex 7, France (e-mail: martial.coulon@enseeiht.fr).
	}%
	\thanks{
		M.~Lops is with the DAEIMI, Università degli Studi di Cassino, 03043, Italy (e-mail: e.grossi@unicas.it; lops@unicas.it).
	}
}

\maketitle

\begin{abstract}

In a MIMO radar network the multiple transmit elements may emit waveforms that differ on power and bandwidth. In this paper, we are asking, given that these two resources are limited, what is the optimal power, optimal bandwidth and optimal joint power and bandwidth allocation for best localization of multiple targets. The well known Cr\'amer-Rao lower bound for target localization accuracy is used as a figure of merit and approximate solutions are found by minimizing a sequence of convex problems. Their quality is assessed through extensive numerical simulations and with the help of a lower-bound on the true solution. Simulations results reveal that bandwidth allocation policies have a definitely stronger impact on performance than power.

\end{abstract}

\begin{IEEEkeywords}
	Multiple-input multiple-output (MIMO) radar, power allocation, bandwidth allocation, Cr\'amer-Rao lower bound (CRLB), nonconvex optimization.
\end{IEEEkeywords}

\section{Introduction}

In a MIMO radar architecture, transmit elements emit their respective signals, which are subsequently scattered by targets in the field of view towards receive elements. The elements at the receive and the transmit ends may be colocated, so that the transmit-receive paths generated by each target share the same phase and amplitude \cite{Li06,Xu06,Li07}. Conversely, if the antenna elements are widely spaced, then they view the targets from different aspect angles, making these (substantially diverse) paths exhibit different amplitudes \cite{Haimovich08,Fishler04-2,Fishler04,Fishler06}. In either case, the signals emitted by the transmitters can be chosen independently, thus adding degrees of freedom \cite{Bliss03} as a precious resource for such tasks as targets localization and tracking.
Due to the lack in angle diversity in the case of colocated antennas, a point target model is suitable, thus enabling coherent processing that exploits the phase coherence between the paths \cite {Li07,Robey05,Lehmann06}. In contrast, in the widely spaced elements case, the point target model breaks down in favor of an extended target model \cite{Mitchell74}, because targets display different radar cross-sections (RCS) in different directions. In this case, processing is non-coherent and angular diversity becomes an object of interest \cite{Haimovich08,He10,He10-2}.

In a non-coherent MIMO system where phase is not preserved across the elements, and if the system elements are time-synchronized, targets may be localized by multilateration \cite{Godrich09,Wang11} or more computation-intensive techniques such as Direct Position Determination (PDP) \cite{Bar-Shalom11}.
When the target reflections towards different directions are unknown, performance is enhanced by illuminating the target from different angles, and averaging the target scintillations \cite{Fishler06}. In this work, we are concerned with improving the accuracy of the localization of a detected, stationary target. The premise of the work is that, judicious allocation of system resources, power and bandwidth, has an impact on the accuracy of localization. We formulate cost functions, develop algorithms and analyze performance of resource allocation methods addressing the question of optimal resource allocation for target localization with non-coherent MIMO radar.

Besides high-accuracy localization, 
resource-aware design is of importance in surveillance radars with mounted mobiles stations powered off-grid, as it can be found in network-centric warfare \cite{Dekker08}. Such a configuration with multiple transmitters and receivers is robust to the loss of nodes, e.g.\ due to hostile action. Furthermore, resource management is an essential part of military operations in hostile environments, where low-probability-of-intercept operation may be required \cite{Nelms11}.
Another growing field where non-coherent MIMO might be applied, is ultra wide band (UWB) radar sensor networks \cite{Gezici05}, which usually operate under severe power constraints because they reuse wireless communications equipment or operate in unregulated frequencies. The ability of UWB to perform through-the-wall detection makes it an attractive tool for detecting intruders in buildings \cite{Paolini08}. Another example of non-coherent radar localization, very similar to our system as it will become clear in the next paragraphs, is presented in \cite{Tong12}, where an air traffic control radar uses multiple transmitters operating at different frequencies, and multilateration techniques are used for tracking.

A criterion for measuring the target location accuracy is necessary for allocating the resources. As a cost function, we choose the Cr\'amer-Rao lower bound (CRLB) for the estimation of a target location in a distributed architecture derived in \cite{Godrich10}. An advantage of this cost function is that it is in closed form, thus making it suitable for algebraic manipulations. Also, the CRLB is known to provide a tight lower bound on the error of an unbiased estimator at high signal-to-noise ratio (SNR) \cite{He10-2,Kay}, and this is the SNR regime in which we operate. And lastly, another important particularity of this CRLB is that, under the assumption of full orthogonality between transmitted signals, it depends only on the two parameters of interest: power and effective bandwidth (see \cite{Gabor46} for the definition of effective bandwidth). It is shown in \cite{Godrich10-2} that this CRLB is the product of two factors, which capture the nature of the localization error: one is the CRLB of time delay estimation by a single sensor, and the second is a term known as Geometric Dilution of Precision (GDOP), which depends on the number of transmit and receive elements and their locations. Tight bounds are also available for lower SNR, such as the Barankin \cite{Tabrikian08} and Ziv-Zakai \cite{Musso07} bounds, however, they entail more complicated expressions not suitable to serve as cost functions.

In a configuration with multiple transmitters, the emitted signals may easily interfere with each other, hence the implementation of such techniques relies on orthogonality or low crosscorrelation of the received waveformss. This enables to separate incoming signals from different transmitters. One way to achieve orthogonality is by assigning different, disjoint transmission bandwidths to the active transmitters, as in a traditional netted radar \cite{Chernyak98}. In telecommunications field, this method is known as Frequency-Division Multiple Access (FDMA). Other schemes for multiple waveforms with low auto- and cross-correlations are proposed in \cite{Deng04,Hao09,Song10}. 
FDMA has some advantages over other multiple-emitter schemes: signal separation relies on simple filtering and the bandwidths of the transmitters can be flexibly changed without breaking the orthogonality of the signals.

In this paper, we are asking what are the optimal power allocation, bandwidth allocation and joint power bandwidth allocation to minimize the CRLB on the localization of multiple targets. In our model a fusion center performs resource allocations based on past localization estimates. Power allocation methods for localizing a single target can be found in \cite{Godrich11,Shen12}. The present work extends these results considerably by adding multiple targets, bandwidth allocation and joint power-bandwidth allocation. Initial results on the material contained in this paper can be found in our conference publication \cite{Garcia12}.

Power and/or bandwidth allocation problems are formulated as constrained optimizations, where we seek to minimize a function based on the CRLB, given constraints on the total power and bandwidth. These problems turn out to be non-convex, hence efficient techniques for solving convex problems do not apply. Instead we opt for an approach based on Sequential Parametric Convex Approximation (SPCA) \cite{Beck09}. In SPCA are devised a sequence of convexified problems that lead to an approximate solution of the allocation problem.

The main contributions of the paper are:
\begin{description}
	\item[a)] Formulation of a new unified framework for three resource allocation problems:
	optimal power allocation for a fixed bandwidth, optimal bandwidth allocation for a fixed total power, and optimal joint power and bandwidth allocations.
	
	\item[b)] Development of efficient algorithms for solving power and/or bandwidth allocation problems for static targets.
	Initial estimates of the targets locations are used as inputs to the optimization algorithms. The algorithms rely on SPCA framework in which a series of iterations pass to each other solutions of convexified allocation problems
	
	\item[c)] Obtaining lower-bounds on the optimum value of the cost function of the optimization problems. The lower-bounds provide a certificate that can confirm that the allocations resulting from our algorithms are close to the optimal one.
	
\end{description}

The paper is organized as follows. Section~\ref{sec:model} explains the signal model. Section~\ref{sec:problems} presents the different resource allocation problems considered in this article and reformulates them so that we only need to solve one problem. Section \ref{sec:solution} is devoted to finding an approximate solution and section~\ref{sec:optimality} provides a tool for assessing the quality of the solution.  Simulations and analysis of the solutions are provided in sections~\ref{sec:simulations}.


\section{Signal model} \label{sec:model}

Consider a MIMO-radar network consisting of $M$ transmitters, located
at $\{(x^{\textrm{tx}}_{m},y^\textrm{tx}_m)\}_{m=1}^{M}$, $N$ receivers, located at $\{(x^\textrm{rx}_n,y^\textrm{rx}_n)\}_{n=1}^{N}$, and $Q$ stationary targets, located at $\{(x^\textrm{tar}_q,y^\textrm{tar}_q)\}_{q=1}^{Q}$. Denote $d^\textrm{tx}_{m,q}$ and $d^\textrm{rx}_{q,n}$ the Euclidean distances from transmitter $m$ to target $q$ and
from target $q$ to receiver $n$, respectively. Let $\{s_m(t)\}_{m=1}^M$ be the transmitted pulses, where each pulse $s_m(t)$ has bandwidth $w_m$, energy $E_m$ and time duration $T_m$. Assuming a fixed pulse repetition $f_r$ frequency, the pulse energy $E_m$ is connected to the average power $p_m$ through the formula $p_m = E_m f_r$. For later use, we define the power vector $\mathbf{p}=[p_1,\ldots,p_M]^\top$ and the effective bandwidths vector $\mathbf{w}=[w_1,\ldots,w_M]^\top$. The transmitted signals are assumed narrowband in the sense that a target's frequency response (for a given transmitter-receiver pair) is represented by a complex-valued scalar. For sufficiently spaced sensors, the target returns vary among pairs, thus each target is modeled as a collection of $MN$ reflection coefficients. In this work, target returns are assumed deterministic and unknown. The low-pass signal observed at the $n$-th receiver is written as
\begin{equation}
	r_{n}(t) =\sum\limits_{q=1}^{Q}\sum\limits_{m=1}^{M} \sqrt{\alpha_{mqn}E_m}h_{mqn}s_{m}(t-\tau_{mqn})+e_{n}(t) \label{eq:signal}
\end{equation}
where $f_{c}$ is the carrier frequency, $c$ the speed of light,
\begin{equation}
	\alpha_{mqn}=\frac{1}{4\pi {d^{\textrm{tx}}_{m,q}}^2} \frac{1}{4\pi {d^{\textrm{rx}}_{q,n}}^2} \frac{1}{4\pi f_{c}^{2}}
\end{equation}
models the pathloss along the path transmitter $m$ -- target $q$ -- receiver $n$. The time delay along the path is $\tau_{mqn}$, and $h_{mqn}$ represents the targets complex gains. The signal propagation is assumed to occur in free-space, and the noise $e_n(t)$ is assumed Gaussian and white (AWGN) with constant power spectral density $N_0$.

The unknown parameters in \eqref{eq:signal} are the target locations
$\{(x^\textrm{tar}_q,y^\textrm{tar}_q)\}_{q=1}^{Q}$ and the $MQN$ complex gains $h_{mqn}$. The goal of the
radar system is to estimate the target location, with the complex gains $h_{mqn}$ serving as nuisance parameters. Our objective is to allocate resources (power and/or bandwidth)
to system elements, to optimize the target localization performance, using the CRLB as the optimization metric.

In the case of a single target, the CRLB is a $2\times2$ matrix, obtained by inverting the Fisher information matrix (FIM), whose diagonal elements are the lower-bounds on the variances of respectively the target location estimate along the x-axis ($\operatorname{var}(\hat{x}^\textrm{tar})$) and y-axis ($\operatorname{var}(\hat{y}^\textrm{tar})$). The trace of this matrix represents a lower-bound on the mean square error (MSE) of the target location estimate, i.e. $\operatorname{var}(\hat{x}^\textrm{tar}_q)+\operatorname{var}(\hat{y}^\textrm{tar}_q) \geq \operatorname{tr}\{\mathbf{C}_q\} \triangleq T_q$, where $\mathbf{C}_q$ is the CRLB matrix of a target located at $\{(x^\textrm{tar}_q,y^\textrm{tar}_q)\}$. An expression for the trace of the CRLB for localizing a single target using a single observation is derived in \cite{Godrich10}:
\begin{equation} \label{eq:CRLB_trace}
	\operatorname{var}(\hat{x}^\textrm{tar}_q)+\operatorname{var}(\hat{y}^\textrm{tar}_q)
	\geq
	T_q
	=\frac{\left(\mathbf{a}_q+\mathbf{b}_q\right)^\top \left(\operatorname{diag}\mathbf{w}\right)^2\mathbf{p}}
	 {\mathbf{p}^{\top}\left(\operatorname{diag}\mathbf{w}\right)^2\mathbf{H}_q\left(\operatorname{diag}\mathbf{w}\right)^2\mathbf{p}}
\end{equation}
The symbol $(\cdot)^\top$ denotes the transpose operator, $H_q = (0.5\mathbf{a}_q\mathbf{b}_q^\top +0.5\mathbf{b}_q\mathbf{a}_q^\top -\mathbf{c}_q\mathbf{c}_q^\top)$, and $\mathbf{a_q}$, $\mathbf{b}_q$ and $\mathbf{c}_q$
are length $M$ vectors defined as follows:
\begin{gather}
	\mathbf{a}_q=\eta
	\left[
	\begin{gathered}
		\sum\limits_{n=1}^{N}\alpha_{1qn}|h_{1qn}|^{2}\left(  \frac{x_1^\textrm{tx}-x_q^\textrm{tar}}{d_{1,q}^\textrm{tx}}+\frac{x_{n}^\textrm{rx}-x_q^\textrm{tar}}{d_{q,n}^\textrm{rx}}\right) ^{2}\\
		\vdots\\
		\sum\limits_{n=1}^{N}\alpha_{Mqn}|h_{Mqn}|^{2}
		\left(  \frac{x_M^\textrm{tx}-y_q^\textrm{tar}}{d_{M,q}^\textrm{tx}}+\frac{y_{n}^\textrm{rx}-y_q^\textrm{tar}}{d_{q,n}^\textrm{rx}}\right) ^{2}
	\end{gathered}
	\right]
	\displaybreak[2]\label{eq:a}\\
	\mathbf{b}_q=\eta
	\left[
	\begin{gathered}
		\sum\limits_{n=1}^{N}\alpha_{1qn}|h_{1qn}|^{2}\left(  \frac{y_1^\textrm{tx}-x_q^\textrm{tar}}{d_{1,q}^\textrm{tx}}+\frac{x_{n}^\textrm{rx}-x_q^\textrm{tar}}{d_{q,n}^\textrm{rx}}\right) ^{2}\\
		\vdots\\
		\sum\limits_{n=1}^{N}\alpha_{Mqn}|h_{Mqn}|^{2}\left(  \frac{x_M^\textrm{tx}-x_q^\textrm{tar}}{d_{M,q}^\textrm{tx}}+\frac{x_{n}^\textrm{rx}-x_q^\textrm{tar}}{d_{q,n}^\textrm{rx}}\right) ^{2}
	\end{gathered}
	\right]
	\displaybreak[2]\label{eq:b}\\
	\mathbf{c}_q=\eta
	\begin{bmatrix}
		\scriptstyle	
		\sum\limits_{n=1}^{N}\alpha_{1qn}|h_{1qn}|^{2}
		\left(  \frac{x_1^\textrm{tx}-x_q^\textrm{tar}}{d_{1,q}^\textrm{tx}}+\frac{x_{n}^\textrm{rx}-x_q^\textrm{tar}}{d_{q,n}^\textrm{rx}}\right)
		 \left(\frac{y_1^\textrm{tx}-x_q^\textrm{tar}}{d_{1,q}^\textrm{tx}}+\frac{x_{n}^\textrm{rx}-x_q^\textrm{tar}}{d_{q,n}^\textrm{rx}}\right)\\
		\vdots\\
		\scriptstyle	
		\sum\limits_{n=1}^{N}\alpha_{Mqn}|h_{Mqn}|^{2}
		 \left(\frac{x_M^\textrm{tx}-y_q^\textrm{tar}}{d_{M,q}^\textrm{tx}}+\frac{y_{n}^\textrm{rx}-y_q^\textrm{tar}}{d_{q,n}^\textrm{rx}}\right)
		 \left(\frac{x_M^\textrm{tx}-x_q^\textrm{tar}}{d_{M,q}^\textrm{tx}}+\frac{x_{n}^\textrm{rx}-x_q^\textrm{tar}}{d_{q,n}^\textrm{rx}}\right)
	\end{bmatrix}
	\label{eq:c}
\end{gather}
The constant $\eta$ is given by $\eta = \frac{8\pi^2}{c^2 f_r N_0}$. The vectors \eqref{eq:b} relate the CRLB parameter $T_q$ to the sensors locations, target location, target gain, and pathloss. Note that the matrix $\mathbf{H}_q$ is symmetric.

Computation of the CRLB for localizing $Q$ targets requires inverting a $2Q\times 2Q$ FIM which is a complicated mathematical operation. To simplify the matrix inversion, we make the following assumption which makes the FIM approximately block diagonal (see \cite{Wei10}):
\begin{equation} \label{eq:assumption}
	\int_{-\infty}^{\infty} s_m(t-\tau_{mqn}) s_m^*(t-\tau_{mq'n}) \,dt \approx 0
\end{equation}
for any pair of targets $q\neq q'$, and for any transmitter $m$ and receiver $n$. Let $\mathbf{C}$ denote the CRLB matrix for multiple targets, then assuming \eqref{eq:assumption}, the FIM becomes a block diagonal matrix, that can be easily inversed, and leads to a CRLB matrix for multiple targets $\overline{\mathbf{C}}$ whose main diagonal is
\begin{equation} \label{eq:diag_C}
	\operatorname{diag}\left(\overline{\mathbf{C}}\right) \approx \left[\operatorname{diag}\left(\mathbf{C}_1\right),\ldots,\operatorname{diag}\left(\mathbf{C}_Q\right)\right] \triangleq \operatorname{diag}\left(\mathbf{C}\right)
\end{equation}
where operator $\operatorname{diag}$ takes the main diagonal of the matrix between the brackets. The lower bounds on the variances of the targets locations estimates are obtained by taking the sum of the diagonal elements corresponding to the respective target
\begin{equation} \label{eq:var_lb}
	\operatorname{var}(\hat{x}^\textrm{tar}_q)+\operatorname{var}(\hat{y}^\textrm{tar}_q)
	\geq
	\operatorname{diag}\left(\overline{\mathbf{C}}\right)_{2(q-1)+1} +
	\operatorname{diag}\left(\overline{\mathbf{C}}\right)_{2q}
	\triangleq \overline{T}_q
\end{equation}
where the subindex selects the components in $\operatorname{diag}\left(\overline{\mathbf{C}}\right)$. Hence, if \eqref{eq:assumption} is satisfied, combining \eqref{eq:diag_C} and \eqref{eq:var_lb} results in $\overline{T}_q \approx T_q$. In section~\ref{sec:simulations}, the validity of this approximation will be assessed by evaluating $\overline{T}_q$ and $T_q$ for multiple scenarios and allocations.

\section{Resource allocation} \label{sec:problems}

\subsection{Problem formulation} \label{sub:formulation}

In this section we formulate several optimization problems that minimize the lower-bounds on the MSE (which are tight under a high-SNR regime) of the targets locations \eqref{eq:CRLB_trace} given constraints on power and bandwidth. These lower-bounds form a length $Q$ vector function $[\operatorname{C}_1(\mathbf{p},\mathbf{w}),\ldots,\operatorname{C}_Q(\mathbf{p},\mathbf{w})]$, where the dependency on the transmitters powers and bandwidths is made explicit. A standard technique for minimizing a vector function is known as scalarization \cite{Boyd}, and it consists in minimizing a scalar function whose input is the vector function.
In this regard a common requirement is ensuring that the localization error of any target is not too large. This can be cast as minimizing the worst CRLB on the variance of all target locations estimates, a criterion known in the literature as minimax. The worst MSE among all targets is written as the scalar function $\max_q \overline{T}_q$. According to this, a possible objective function $\overline{f}$ to minimize is
\begin{equation} \label{eq:true_cost_function}
	\overline{f}(\mathbf{p},\mathbf{w}) = \max_{q\in\{1,\ldots,Q\}} \overline{T}_q (\mathbf{p},\mathbf{w})
\end{equation}
We refer from now on to this cost function as the ``maximum CRLB''. However, because $\overline{T}_q$ lacks an easy-manipulable algebraic expression, we rely to its approximation
\begin{equation} \label{eq:cost_function}
	f(\mathbf{p},\mathbf{w}) = \max_{q\in\{1,\ldots,Q\}} T_q (\mathbf{p},\mathbf{w})
\end{equation}
which will be referred as the ``approximate maximum CRLB''.
Three allocation problems can now be formally stated using this objective function:

\begin{problem}[Power allocation] \label{prob:power} Given a total power $P$ and a fixed bandwidth $w$ per transmitter, the optimal power allocation is the solution $\mathbf{p}_{opt}$ to
	\begin{equation} \label{eq:power}
		\mathbf{p}_{opt} = \left\{
		\begin{aligned}
			\min_{\mathbf{p}}\quad &
			\max_{q\in\{1,\ldots,Q\}} T_q (\mathbf{p},w\mathbf{1}) \\
			\textnormal{s.t.} \quad &
			\mathbf{1}^\top\mathbf{p} \leq P \\
			& \mathbf{p}\geq\mathbf{0}
		\end{aligned} \right.
	\end{equation}
\end{problem}
\noindent where $\min_\mathbf{p}$ means ``minimize with respect to $\mathbf{p}$'', s.t.\ is the abbreviation for ``subject to'', $\mathbf{1}$ and $\mathbf{0}$ are the all-ones and all-zeros vectors respectively and $\leq$ is component-wise ``smaller or equal than''.

\begin{problem}[Bandwidth allocation] \label{prob:bandwidth} Given a uniform power allocation $p$ and a total bandwidth $B$, the bandwidth allocation is the solution $\mathbf{w}_{opt}$ to
	\begin{equation}  \label{eq:bandwidth}
		\mathbf{w}_{opt} = \left\{
		\begin{aligned}
			\min_{\mathbf{w}}\quad &
			\max_{q\in\{1,\ldots,Q\}} T_q (p\mathbf{1},\mathbf{w}) \\
			\textnormal{s.t.} \quad &
			\mathbf{1}^\top\mathbf{w} \leq B \\
			& \mathbf{w}\geq\mathbf{0}
		\end{aligned} \right.
	\end{equation}
\end{problem}

\begin{problem}[Joint power and bandwidth allocation] \label{prob:joint} Given a total power $P$ and bandwidth $B$, the power and bandwidth allocations are the solution $(\mathbf{p}_{opt}^j,\mathbf{w}_{opt}^j)$ to
	\begin{equation} \label{eq:joint}
		\left(\mathbf{p}_{opt}^j,\mathbf{w}_{opt}^j\right) = \left\{
		\begin{aligned}
			\min_{\mathbf{p},\mathbf{w}}\quad &
			\max_{q\in\{1,\ldots,Q\}} T_q (\mathbf{p},\mathbf{w}) \\
			\textnormal{s.t.} \quad &
			\mathbf{1}^\top\mathbf{p} \leq P \\
			& \mathbf{1}^\top\mathbf{w} \leq B  \\
			& \mathbf{p},\mathbf{w}\geq\mathbf{0}
		\end{aligned}\right.
	\end{equation}
\end{problem}

In all of the above problems, we minimize function \eqref{eq:cost_function}, which depends on the targets locations through $T_q$ \eqref{eq:CRLB_trace}. The target locations are not known, otherwise we would not allocate power and/or bandwidth to improve the localization accuracy. Thus from now on we assume that the target locations in $T_q$ are coarse estimates obtained in previous cycles and denoted $[\tilde{x}^\textrm{tar}_1,\tilde{y}^\textrm{tar}_1,\ldots,\tilde{x}^\textrm{tar}_Q,\tilde{y}^\textrm{tar}_Q]$.

\subsection{Unified framework} \label{sub:reformulation}

Next, we rewrite the three previous allocation problems in a unified form. Such reformulation enables us to find an approximate solution to all three problems using the same mathematical tools. In order to write the allocation problems in this unified form, we first need to perform some algebraic manipulations. We start by enunciating the following lemma:
\begin{lemma}[Scaling] \label{lem:scale}
	For any constants $\alpha,\beta> 0$, the lower-bound on the MSE of a target location meets the following scaling property
	\begin{equation}
		T_q (\alpha\mathbf{p},\beta\mathbf{w})=\frac{1}{\alpha\beta^2}T_q(\mathbf{p},\mathbf{w})
	\end{equation}
\end{lemma}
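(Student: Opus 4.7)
The plan is to verify the scaling identity by direct substitution into the closed-form expression \eqref{eq:CRLB_trace} for $T_q$, and then collect the resulting powers of $\alpha$ and $\beta$. Since the formula is an explicit rational function of $\mathbf{p}$ and $\mathbf{w}$, no deeper machinery is needed; the scaling is purely an algebraic consequence of the degrees of homogeneity of the numerator and denominator.

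First I would note that neither $\mathbf{a}_q$, $\mathbf{b}_q$, $\mathbf{c}_q$ (and hence $\mathbf{H}_q$) depends on $\mathbf{p}$ or $\mathbf{w}$, since by \eqref{eq:a}--\eqref{eq:c} they are built only from geometry, gains, and $\eta$. Therefore, under the substitution $\mathbf{p} \mapsto \alpha \mathbf{p}$ and $\mathbf{w} \mapsto \beta \mathbf{w}$, the only quantities that change are the $\operatorname{diag}$ factors and the $\mathbf{p}$ vectors. Using $\operatorname{diag}(\beta \mathbf{w}) = \beta \operatorname{diag}(\mathbf{w})$, one finds that the numerator of $T_q(\alpha \mathbf{p}, \beta \mathbf{w})$ is
\begin{equation}
(\mathbf{a}_q + \mathbf{b}_q)^\top (\operatorname{diag}(\beta\mathbf{w}))^2 (\alpha \mathbf{p}) = \alpha \beta^2 (\mathbf{a}_q + \mathbf{b}_q)^\top (\operatorname{diag}\mathbf{w})^2 \mathbf{p},
\end{equation}
i.e.\ it is homogeneous of degree $1$ in $\mathbf{p}$ and degree $2$ in $\mathbf{w}$.

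Next I would compute the denominator in the same way. Each of the two $(\operatorname{diag}(\beta\mathbf{w}))^2$ factors contributes $\beta^2$, and each of the two $\alpha \mathbf{p}$ factors contributes $\alpha$, so the denominator equals $\alpha^2 \beta^4 \, \mathbf{p}^\top (\operatorname{diag}\mathbf{w})^2 \mathbf{H}_q (\operatorname{diag}\mathbf{w})^2 \mathbf{p}$, i.e.\ it is homogeneous of degree $2$ in $\mathbf{p}$ and degree $4$ in $\mathbf{w}$. Dividing numerator by denominator cancels the common factor $\alpha \beta^2 \, T_q(\mathbf{p},\mathbf{w})$-worth of terms and leaves an overall factor $\alpha \beta^2 / (\alpha^2 \beta^4) = 1/(\alpha \beta^2)$, as claimed.

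There is essentially no hard step: the entire argument is a bookkeeping of scaling exponents, and the assumption $\alpha, \beta > 0$ only serves to ensure the denominator does not vanish from a sign change and that the substituted $\mathbf{p}, \mathbf{w}$ remain admissible (nonnegative). The only minor care is to observe that $\mathbf{H}_q$ is independent of $\mathbf{p}$ and $\mathbf{w}$, so it does not participate in the scaling count; this is immediate from its definition in terms of $\mathbf{a}_q$, $\mathbf{b}_q$, $\mathbf{c}_q$.
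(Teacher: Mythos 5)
Your proof is correct and is exactly what the paper does: its one-line proof ("expand $T_q(\alpha\mathbf{p},\beta\mathbf{w})$ using \eqref{eq:CRLB_trace}") is precisely the homogeneity bookkeeping you carry out, with the numerator scaling as $\alpha\beta^{2}$ and the denominator as $\alpha^{2}\beta^{4}$.
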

\begin{proof}
	It suffices to expand $T_q (\alpha\mathbf{p},\beta\mathbf{w})$ using \eqref{eq:CRLB_trace}.
\end{proof}
Applying Lemma~\ref{lem:scale} to Problems~\ref{prob:power} and \ref{prob:bandwidth}, simplifies the respective objective functions by moving from scaling the argument to scaling the full function. The following proposition will be applied to simplify Problem~\ref{prob:joint}:
\begin{proposition} \label{prop:proportionallity}
	The power $\mathbf{p}_{opt}^j$ and bandwidth $\mathbf{w}_{opt}^j$ solutions to Problem \ref{prob:joint} are related through
	\begin{equation}
		\mathbf{w}_{opt}^j = \frac{B}{P} \mathbf{p}_{opt}^j
	\end{equation}
\end{proposition}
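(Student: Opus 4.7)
My plan rests on two structural observations. First, expanding \eqref{eq:CRLB_trace} shows that $T_q(\mathbf{p},\mathbf{w})$ depends on the two vectors only through the \emph{product vector} $\mathbf{u}$ with components $u_m = w_m^2 p_m$: the numerator reduces to $(\mathbf{a}_q+\mathbf{b}_q)^\top\mathbf{u}$ and the denominator to $\mathbf{u}^\top\mathbf{H}_q\mathbf{u}$. Combined with Lemma~\ref{lem:scale}, this means that, viewed as a function of $\mathbf{u}$ alone, $T_q$ is homogeneous of degree $-1$: any uniform rescaling $\mathbf{u}\to\lambda\mathbf{u}$ divides $T_q$ by $\lambda$, simultaneously for every target $q$.

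My strategy is to show that every feasible $(\mathbf{p},\mathbf{w})$ for Problem~\ref{prob:joint} is dominated by a proportional replacement of the form $\hat{\mathbf{w}}=(B/P)\hat{\mathbf{p}}$. Given $(\mathbf{p},\mathbf{w})$ with $\sum_m p_m\le P$ and $\sum_m w_m\le B$, set $S=\sum_k u_k^{1/3}$ and define
\begin{equation*}
\hat{p}_m=\frac{P\,u_m^{1/3}}{S},\qquad \hat{w}_m=\frac{B\,u_m^{1/3}}{S}.
\end{equation*}
Then $\hat{\mathbf{w}}=(B/P)\hat{\mathbf{p}}$, both budgets are saturated, and a direct computation gives $\hat{u}_m=\hat{w}_m^2\hat{p}_m=(B^2P/S^3)\,u_m$. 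In other words, $\hat{\mathbf{u}}$ is a \emph{uniform} rescaling of $\mathbf{u}$ by the $m$-independent factor $B^2P/S^3$, and the homogeneity observation immediately yields $T_q(\hat{\mathbf{p}},\hat{\mathbf{w}})=(S^3/(B^2P))\,T_q(\mathbf{p},\mathbf{w})$ for every $q$.

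To close the argument, I would invoke H\"older's inequality with conjugate exponents $3/2$ and $3$:
\begin{equation*}
S=\sum_m w_m^{2/3}p_m^{1/3}\le\Bigl(\sum_m w_m\Bigr)^{2/3}\Bigl(\sum_m p_m\Bigr)^{1/3}\le B^{2/3}P^{1/3},
\end{equation*}
with equality throughout if and only if $w_m$ is proportional to $p_m$ and both budgets are active. Hence $S^3/(B^2P)\le 1$ and $T_q(\hat{\mathbf{p}},\hat{\mathbf{w}})\le T_q(\mathbf{p},\mathbf{w})$ uniformly in $q$, strictly so in the non-proportional case. Taking the max over $q$ shows the replacement is never worse for the minimax objective, which forces any optimizer of Problem~\ref{prob:joint} to satisfy $\mathbf{w}_{opt}^j=(B/P)\mathbf{p}_{opt}^j$.

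The step I expect to be the main obstacle is spotting the correct replacement $\hat{p}_m\propto u_m^{1/3}$: the $1/3$-power is engineered precisely so that the ratio $\hat{u}_m/u_m$ is $m$-independent, which is exactly what lets Lemma~\ref{lem:scale} translate the scalar bound $S\le B^{2/3}P^{1/3}$ into a uniform bound on the whole vector of $T_q$'s in a single line. Once this ansatz is in place, the rest of the proof is just the matching instance of H\"older.
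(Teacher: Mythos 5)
Your proposal is correct and is essentially the paper's own proof (Appendix A): the same replacement $\hat{p}_m\propto p_m^{1/3}w_m^{2/3}$ normalized to saturate both budgets, the same use of the scaling property of $T_q$, and the same H\"older inequality with exponents $3$ and $3/2$; your reformulation in terms of the product vector $u_m=w_m^2p_m$ is only a presentational variant.
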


\begin{proof}
	See Appendix~\ref{appen:proportionallity}.
\end{proof}

Knowing beforehand how the power $(\mathbf{p}_{opt}^j)$ and bandwidth $(\mathbf{w}_{opt}^j)$ solutions relate to each other for Problem~\ref{prob:joint}, we can restrict the feasible set of Problem~\ref{prob:joint} to points satisfying $\mathbf{w}=\frac{B}{P}\mathbf{p}$. Performing such substitution for vector $\mathbf{w}$ in Problem~\ref{prob:joint}, transforms it into an optimization problem with only one vector variable $\mathbf{p}$ instead of two,
\begin{equation} \label{eq:joint_simplified}
	\mathbf{p}_{opt}^j = \left\{
	\begin{aligned}
		\min_{\mathbf{p}}\quad &
		\max_{q\in\{1,\ldots,Q\}} T_q (\mathbf{p},\mathbf{p}) \\
		\text{s.t.} \quad &
		\mathbf{1}^\top\mathbf{p} \leq P \\
		& \mathbf{p}\geq\mathbf{0}
	\end{aligned}
	\right.
\end{equation}
where $\mathbf{w}_{opt}^j$ is recovered by applying Proposition~\ref{prop:proportionallity}.

At this point, it has been shown that the specific values of bandwidth per user ($w$) and power per user ($p$) in Problems~\ref{prob:power} and \ref{prob:bandwidth}, do not change the solutions, and consequently Problems~\ref{prob:power} and \ref{prob:bandwidth} can be solved with $w=p=1$. Additionally thanks to Proposition~\ref{prop:proportionallity}, Problem~\ref{prob:joint} has been rewritten in a more compact form \eqref{eq:joint_simplified}. Notice that except for the problem specific constants $P$ and $B$,  Problems~\ref{prob:power}, \ref{prob:bandwidth} and \ref{prob:joint} (expressed as \eqref{eq:joint_simplified}) have similar objective functions and constraints. In fact, upon introducing the function
\begin{equation} \label{eq:x}
	g_q(\mathbf{y},k)
	=\frac{\left(\mathbf{a}_q+\mathbf{b}_q\right)^\top \left(\operatorname{diag}\mathbf{y}\right)^k\mathbf{y}}
	 {\mathbf{y}^{\top}\left(\operatorname{diag}\mathbf{y}\right)^k\mathbf{H}_q\left(\operatorname{diag}\mathbf{y}\right)^k\mathbf{y}}
\end{equation}
that closely resembles that of $T_q$ in \eqref{eq:CRLB_trace}, the three problems can be dealt with in a unified manner through
\begin{equation} \label{eq:unified}
	\begin{aligned}
		\min_{\mathbf{y}}\quad & \max_{q\in\{1,\ldots,Q\}} g_q (\mathbf{y},k) \\
		\textnormal{s.t.} \quad &
		\mathbf{1}^\top\mathbf{y} \leq D  \\
		& \mathbf{y}\geq\mathbf{0}
	\end{aligned}
\end{equation}
Denoting the optimal solution $\mathbf{y}_{opt}$, it is easily verified that:
\begin{itemize}
	\begin{subequations} \label{eq:unified_bullets}
		\item Let
		\begin{align} \label{eq:unified_power}
			\mathbf{y}=\mathbf{p} &&& D=P && k=0
		\end{align}
		then $\mathbf{p}_{opt}=\mathbf{y}_{opt}$
	
		\item Let
		\begin{align} \label{eq:unified_bandwidth}
			\mathbf{y}=\mathbf{w} &&& D=B && k=1
		\end{align}
		then $\mathbf{w}_{opt}=\mathbf{y}_{opt}$
		
		\item Let
		\begin{align} \label{eq:unified_joint}
			\mathbf{y}=\mathbf{p} &&& D=P && k=2
		\end{align}
	\end{subequations}
	then $(\mathbf{p}_{opt}^j,\mathbf{w}_{opt}^j)=(\mathbf{y}_{opt},\frac{B}{P}\mathbf{y}_{opt})$
\end{itemize}

{\tiny \color{white} \eqref{eq:unified_power}\eqref{eq:unified_bandwidth}\eqref{eq:unified_joint}}

An optimization problem where the $\max$ operator appears in the constraints, instead of being in the objective function like in problem \eqref{eq:unified}, is in general easier to solve. For instance if a problem had constraint $\max_{q\in\{1,\ldots,Q\}} g_q (\mathbf{y},k)\leq 1$, then the $\max$ operator may be avoided by expressing the constraint
\begin{equation} \label{eq:x_constraints}
	g_q(\mathbf{y},k) \leq 1 \qquad\textrm{for all } q=1,\ldots,Q
\end{equation}
It turns out that the solution to the optimization problem,
\begin{equation} \label{eq:x_reformulated}
	\begin{aligned}
		\min_{\mathbf{y}}\quad & \mathbf{1}^\top\mathbf{y} \\
		\text{s.t.} \quad &
		\max_{q\in\{1,\ldots,Q\}} g_q(\mathbf{y},k) \leq 1 \\
		& \mathbf{y}\geq\mathbf{0}
	\end{aligned}
\end{equation}
which has no $\max$ operator in the objective function, can be closely related to that of problem (\ref{eq:unified}) as established by the following proposition:
\begin{proposition} \label{prop:power_dual}
	The solution to problem \eqref{eq:unified} is the same than for problem \eqref{eq:x_reformulated} except for a scaling factor.
\end{proposition}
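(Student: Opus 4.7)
The plan is to exploit the homogeneity of the objective $g_q(\cdot,k)$, which follows directly from its definition in \eqref{eq:x}. Scaling the argument by $\alpha>0$ multiplies the numerator by $\alpha^{k+1}$ and the denominator by $\alpha^{2(k+1)}$, giving
\begin{equation}
g_q(\alpha\mathbf{y},k) = \alpha^{-(k+1)}\,g_q(\mathbf{y},k).
\end{equation}
Consequently $G(\mathbf{y}):=\max_q g_q(\mathbf{y},k)$ is also positively homogeneous of degree $-(k+1)$, while $F(\mathbf{y}):=\mathbf{1}^\top\mathbf{y}$ is linear. This is the single identity that drives everything else and it is consistent with Lemma~\ref{lem:scale} in the three relevant cases $k=0,1,2$.

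The key step is a radial decomposition: I would write each nonzero feasible point as $\mathbf{y}=t\mathbf{u}$ with $t=F(\mathbf{y})>0$ and $\mathbf{u}$ on the simplex $\Delta:=\{\mathbf{u}\geq\mathbf{0}:\mathbf{1}^\top\mathbf{u}=1\}$, and then perform the minimizations in the order $t$ first, $\mathbf{u}$ second. For problem \eqref{eq:unified}, the constraint becomes $t\leq D$, and since $G(t\mathbf{u})=t^{-(k+1)}G(\mathbf{u})$ is strictly decreasing in $t$, the inner minimum over $t$ is attained at $t=D$, reducing \eqref{eq:unified} to $\min_{\mathbf{u}\in\Delta} G(\mathbf{u})$ with optimizer $\mathbf{y}_{opt}=D\mathbf{u}^\star$. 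For problem \eqref{eq:x_reformulated}, the constraint $G(t\mathbf{u})\leq 1$ is equivalent to $t\geq G(\mathbf{u})^{1/(k+1)}$, so minimizing $F(t\mathbf{u})=t$ fixes $t=G(\mathbf{u})^{1/(k+1)}$, leaving $\min_{\mathbf{u}\in\Delta} G(\mathbf{u})^{1/(k+1)}$; since $x\mapsto x^{1/(k+1)}$ is strictly increasing on $(0,\infty)$, this has the same minimizer $\mathbf{u}^\star$, giving $\tilde{\mathbf{y}}_{opt}=G(\mathbf{u}^\star)^{1/(k+1)}\mathbf{u}^\star$.

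Comparing the two optimizers yields $\tilde{\mathbf{y}}_{opt}=\bigl(G(\mathbf{u}^\star)^{1/(k+1)}/D\bigr)\,\mathbf{y}_{opt}$, which establishes the proposition with an explicit positive scaling factor. The only point requiring care, and the step I expect to be the main obstacle, is the well-definedness of the radial decomposition and the positivity of $G$ on $\Delta$: one must check that the origin is never optimal (it is infeasible in \eqref{eq:x_reformulated} and renders $g_q$ indeterminate in \eqref{eq:unified}) and that the denominator $\mathbf{u}^\top(\operatorname{diag}\mathbf{u})^k\mathbf{H}_q(\operatorname{diag}\mathbf{u})^k\mathbf{u}$ remains strictly positive on the portion of the simplex that activates enough transmitters to localize each target. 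Both issues are mild and rely on $g_q$ representing a finite positive CRLB in the regime of interest; once they are handled, no further substantive work is required, since the argument is entirely driven by the homogeneity identity above.
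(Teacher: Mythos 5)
Your proof is correct, and it takes a genuinely different route from the paper's. The paper argues by contradiction: it supposes problem \eqref{eq:unified} admits a solution $\mathbf{y}^*$ strictly better than every feasible scaled copy of the solution $\mathbf{y}'$ of \eqref{eq:x_reformulated}, uses the same homogeneity identity $g_q(\alpha\mathbf{y},k)=\alpha^{-(k+1)}g_q(\mathbf{y},k)$ to bound $\max_q g_q(\mathbf{y}^*,k)$, and then rescales $\mathbf{y}^*$ onto the constraint surface $\max_q g_q(\cdot,k)=1$ to produce a point with $\mathbf{1}^\top\mathbf{y}''<\mathbf{1}^\top\mathbf{y}'$, contradicting optimality of $\mathbf{y}'$. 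Your radial decomposition $\mathbf{y}=t\mathbf{u}$, $\mathbf{u}\in\Delta$, is a direct argument that collapses both problems onto the same reduced problem $\min_{\mathbf{u}\in\Delta}G(\mathbf{u})$ over the simplex. What your version buys: it delivers the explicit scaling factor $G(\mathbf{u}^\star)^{1/(k+1)}/D$ in one stroke, and the optimal-radius computations $t=D$ and $t=G(\mathbf{u})^{1/(k+1)}$ are precisely the statements that the budget constraint and the CRLB constraint are active at the optimum --- facts the paper must establish separately by two further contradiction arguments (Lemmas~\ref{lem:active_unified} and \ref{lem:active_CRLB}) before it can state the scaling relation \eqref{eq:relation_power}. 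The paper's approach is more elementary in that it never introduces the simplex parametrization, but it leaves the scaling constant implicit at the point where Proposition~\ref{prop:power_dual} is proved. Two small caveats on your side: solutions need not be unique, so the correspondence is properly between the \emph{sets} of minimizers (a quibble that applies equally to the paper's statement), and your well-definedness remarks about excluding the origin and keeping the denominator of $g_q$ positive are the right ones --- the paper assumes these implicitly without comment.
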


\begin{proof}
	See Appendix~\ref{appen:equivalent}.
\end{proof}

Writing the constraints of problem \eqref{eq:x_reformulated} in the form of \eqref{eq:x_constraints}, and expressing $g_q(\mathbf{y},k)$ using \eqref{eq:x}, problem \eqref{eq:x_reformulated} is rewritten as
\begin{problemNoDots}[Canonical problem] \label{prob:unified2}
	\begin{align}
		\min_{\mathbf{y}}\quad & \mathbf{1}^\top\mathbf{y} \\
		\textnormal{s.t.} \quad &
		\left(\mathbf{a}_q+\mathbf{b}_q\right)^\top \left(\operatorname{diag}\mathbf{y}\right)^k\mathbf{y}
		\leq
		 \mathbf{y}^{\top}\left(\operatorname{diag}\mathbf{y}\right)^k\mathbf{H}_q\left(\operatorname{diag}\mathbf{y}\right)^k\mathbf{y} \\[-3ex]
		& \qquad\qquad\qquad\qquad\qquad\qquad\textnormal{for all } q=1,\ldots,Q \label{eq:unified2_polynomials}\\
		& \mathbf{y}\geq\mathbf{0}
	\end{align}
\end{problemNoDots}

Problem~\ref{prob:unified2} is the final form we seek to attain in this section, however, the scaling constant relating its solution to the solution of problem \eqref{eq:unified} is still unknown. For this purpose we enunciate the following lemma

\begin{lemma} \label{lem:active_unified}
	Constraint $\mathbf{1}^\top\mathbf{y}\leq D$ in problem \eqref{eq:unified} is active at the solution $\mathbf{y}_{opt}$.
\end{lemma}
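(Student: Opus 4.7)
The plan is to prove the claim by contradiction using a homogeneity/scaling argument on the cost function $g_q(\mathbf{y},k)$. The key observation is that $g_q(\mathbf{y},k)$ is (negatively) homogeneous in $\mathbf{y}$: a direct inspection of the definition \eqref{eq:x} shows that, for any $\alpha>0$,
\begin{equation}
g_q(\alpha\mathbf{y},k)=\frac{\alpha^{k+1}(\mathbf{a}_q+\mathbf{b}_q)^\top(\operatorname{diag}\mathbf{y})^k\mathbf{y}}{\alpha^{2k+2}\,\mathbf{y}^\top(\operatorname{diag}\mathbf{y})^k\mathbf{H}_q(\operatorname{diag}\mathbf{y})^k\mathbf{y}}=\alpha^{-(k+1)}g_q(\mathbf{y},k),
\end{equation}
which is also consistent with Lemma~\ref{lem:scale} specialized to the three identifications in \eqref{eq:unified_power}--\eqref{eq:unified_joint}. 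Since $k\in\{0,1,2\}$, the exponent $-(k+1)$ is strictly negative, so $g_q(\cdot,k)$ is strictly decreasing along any ray emanating from the origin.

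Now suppose, towards a contradiction, that the constraint is inactive, i.e., $\mathbf{1}^\top\mathbf{y}_{opt}<D$. Set $\alpha=D/(\mathbf{1}^\top\mathbf{y}_{opt})>1$ and define $\mathbf{y}'=\alpha\mathbf{y}_{opt}$. Then $\mathbf{y}'\geq\mathbf{0}$ and $\mathbf{1}^\top\mathbf{y}'=D$, so $\mathbf{y}'$ is feasible for \eqref{eq:unified}. Using the scaling identity and $\alpha>1$,
\begin{equation}
g_q(\mathbf{y}',k)=\alpha^{-(k+1)}g_q(\mathbf{y}_{opt},k)<g_q(\mathbf{y}_{opt},k)\qquad\text{for every }q,
\end{equation}
where strictness uses that $g_q(\mathbf{y}_{opt},k)>0$; this in turn follows because the lower-bounds $T_q$ on variances are strictly positive at any admissible point where the FIM is nonsingular. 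Taking the maximum over $q$ on both sides yields $\max_q g_q(\mathbf{y}',k)<\max_q g_q(\mathbf{y}_{opt},k)$, which contradicts the optimality of $\mathbf{y}_{opt}$. Therefore $\mathbf{1}^\top\mathbf{y}_{opt}=D$, i.e., the constraint is active.

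The only subtlety (and the one step that requires a remark rather than computation) is ruling out the degenerate case $\mathbf{y}_{opt}=\mathbf{0}$, for which $g_q$ is not even defined. This is handled by noting that the zero vector is not feasible in any meaningful sense: the objective $\max_q g_q(\mathbf{y},k)$ blows up as $\mathbf{y}\to\mathbf{0}$ by the same scaling identity, so the optimum is attained in the interior of the nonnegative orthant along some nonzero direction. Modulo this harmless observation, the argument is purely the homogeneity contradiction sketched above, so I do not anticipate a real obstacle.
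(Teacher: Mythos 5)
Your proof is correct and follows essentially the same route as the paper: a contradiction argument that scales a putatively interior point up to the boundary $\mathbf{1}^\top\mathbf{y}=D$ and invokes the homogeneity $g_q(\alpha\mathbf{y},k)=\alpha^{-(k+1)}g_q(\mathbf{y},k)$ to show the objective strictly decreases. The extra remarks on positivity of $g_q$ and excluding $\mathbf{y}=\mathbf{0}$ are sensible points the paper leaves implicit, but they do not change the argument.
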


\begin{proof}
	The proof is done by contradiction. Assume there exists a minimum point $\mathbf{y}'$ such that $\mathbf{1}^\top\mathbf{y}' < D$, i.e.\ constraint is not active. Define $\mathbf{y}^*=\frac{D}{\mathbf{1}^\top\mathbf{y}'}\mathbf{y}'$. From the definition of $g_q(\mathbf{y},k)$ in \eqref{eq:x}
	\begin{equation}
		\max_{q} g_q(\mathbf{y}^*,k) =
		\left(\frac{\mathbf{1}^\top\mathbf{y}'}{D}\right)^{k+1} \max_{q} g_q(\mathbf{y}',k)
		< \max_{q} g_q(\mathbf{y}',k)
	\end{equation}
	This contradicts the assumption that the CRLB achieves its minimum at $\mathbf{y}'$. It follows that the power constraint evaluated at the optimal point must be active.
\end{proof}

Thus by Lemma~\ref{lem:active_unified} and Proposition~\ref{prop:proportionallity}, if $\mathbf{y}'$ is solution of problem \eqref{eq:unified}, then a solution to Problem~\ref{prob:unified2} is given by
\begin{equation} \label{eq:relation_power}
	\mathbf{y}_\textrm{opt} = \frac{D}{\mathbf{1}^\top\mathbf{y}'} \mathbf{y}'
\end{equation}

Using this result we can now solve Problem~\ref{prob:unified2} and relate it to the solution of problem \eqref{eq:unified}, which relates in turn, is related to the original Problems~\ref{prob:power}-\ref{prob:joint} via \eqref{eq:unified_bullets}. Putting it all together, we obtain the direct link between Problem~\ref{prob:unified2} which we plan to solve and the original Problems~\ref{prob:power}-\ref{prob:joint}. Given the solution to Problem~\ref{prob:unified2}, $\mathbf{y}'$, the solutions to Problems~\ref{prob:power}, \ref{prob:bandwidth} or \ref{prob:joint}, depending on the value of $k$, can be obtained as follows:
\begin{equation} \label{eq:relation}
	\begin{split}
		\begin{aligned}
			&\textrm{Problem}\textrm{ \ref{prob:power}} & &\textrm{Problem~\ref{prob:bandwidth}} \\
			\mathbf{p}_{opt} &= \left.\frac{P}{\mathbf{1}^\top\mathbf{y}'}\mathbf{y}'\right|_{k=0} &
			\mathbf{w}_{opt} &= \left.\frac{B}{\mathbf{1}^\top\mathbf{y}'}\mathbf{y}'\right|_{k=1}
		\end{aligned} \\[2ex]
		\begin{gathered}
			\textrm{Problem}\textrm{ \ref{prob:joint}} \\
			\mathbf{p}_{opt}^j = \left.\frac{P}{\mathbf{1}^\top\mathbf{y}'}\mathbf{y}'\right|_{k=2} \qquad
			\mathbf{w}_{opt}^j = \left.\frac{B}{\mathbf{1}^\top\mathbf{y}'}\mathbf{y}'\right|_{k=2}
		\end{gathered}
	\end{split}
\end{equation}

\section{Proposed approximate solution} \label{sec:solution}

Currently our goal  is to find an approximate solution to Problem~\ref{prob:unified2} for any $k\in\{0,1,2\}$. It is not difficult to see that for Problem~\ref{prob:unified2}, the objective function is linear, constraints \eqref{eq:unified2_polynomials} are polynomials of 2nd (case $k=0$), 4th ($k=1$) or 6th ($k=2$) order, and the last constraint is linear. The problem would fit in the framework of convex optimization, for which very efficient techniques exist \cite{Boyd}, if \eqref{eq:unified2_polynomials} were convex. However, this is true only for some very particular cases of $\mathbf{H}_q$. Therefore we have to rely on techniques designed for nonconvex optimization. In most cases such techniques do not lead to the global minimum. Moreover, the computational cost of such techniques grows exponentially  with the dimension of the problem.

An alternative approach is to approximate the original problem with a sequence of convex problems. An approach firstly proposed in \cite{Marks78}, and later referred to as Sequential Parametric Convex Approximation (SPCA) \cite{Beck09}. To explain SPCA, we start by observing that any constraint in Problem~\ref{prob:unified2} may be written as $h(\mathbf{y})\leq 0$, where $h$ will be called the constraint function. If $h$ is a convex function, then $h(\mathbf{y})\leq 0$ is a convex constraint.  The main idea of SPCA is that at each iteration, each of the nonconvex constraints functions in Problem~\ref{prob:unified2} is replaced by a convex approximation. To construct such approximation, the nonconvex function is decomposed into a sum of a convex and a concave function. The concave function is linearized around a point as proposed in \cite{Aspremont03}. At each iteration, the algorithm solves the approximate convex problem. It stops when there is no further improvement in the objective function. The solution at each iteration is passed to the next iteration as the linearization point. Convergence of this algorithm is ensured at least to a local minimum \cite{Marks78}.

As mentioned previously, the first step is to decompose the nonconvex constraints functions in Problem~\ref{prob:unified2} into a sum of a convex and a concave function. Accomplishing this for \eqref{eq:unified2_polynomials} is not straightforward. First, we introduce a vector of slack variables $\mathbf{z} = [z_1,\ldots,z_M]^\top$, which are linked to the components of $\mathbf{y}$ by $z_m=y_m^{k+1}$. Problem~\ref{prob:unified2} is recast as
\begin{subequations} \label{eq:slack_problem}
	\begin{alignat}{3}
		\min_{\mathbf{y},\mathbf{z}} & & &\mathbf{1}^\top\mathbf{y} \\
		\textnormal{s.t.} & & \; & \left(\mathbf{a}_q+\mathbf{b}_q\right)^\top \mathbf{z} - \mathbf{z}^{\top}\mathbf{H}_q\mathbf{z} \leq 0
		& \;\;\;\textnormal{for }q=1,\ldots,Q& \label{eq:slack_polynomial}\\
		& & & z_m - y_m^{k+1} = 0 & \hspace{-1em} \textrm{for }m=1,\ldots,M& \hspace{1.3em} \label{eq:slack_equality} \\
		& & &\mathbf{y},\mathbf{z}\geq\mathbf{0}
	\end{alignat}
\end{subequations}

The advantage of this new problem over Problem~\ref{prob:unified2} is that \eqref{eq:slack_polynomial} is now in a quadratic form, and a simple way to decompose a quadratic function into a convex plus concave function is to separate matrix $\mathbf{H}_q$ into a sum of a nonnegative-definite ($\mathbf{H}^+_q$) and a nonpositive-definite ($\mathbf{H}^-_q$) matrix. Concerning the newly introduced constraint \eqref{eq:slack_equality}, notice that if we relax it by putting $z_m - y_m^{k+1} \leq 0$ instead, then it is also a sum of a convex ($z_m$) and a concave ($-y_m^{k+1}$) function:
\begin{subequations} \label{eq:DCproblem}
	\begin{align}
		\min_{\mathbf{y},\mathbf{z}}\quad & \mathbf{1}^\top\mathbf{y} \\
		\textnormal{s.t.} \quad &
		\left(\mathbf{a}_q+\mathbf{b}_q\right)^\top \mathbf{z} - \mathbf{z}^{\top}\mathbf{H}_q^-\mathbf{z} - \mathbf{z}^{\top}\mathbf{H}_q^+\mathbf{z} \leq 0 \label{eq:DCproblem_polynomial} \\
		& \qquad\qquad\qquad\qquad\qquad\textnormal{for }q=1,\ldots,Q \\
		& z_m - y_m^{k+1} \leq 0 \quad\qquad\textrm{for }m=1,\ldots,M \label{eq:DCproblem_relax} \\
		& \mathbf{y},\mathbf{z}\geq\mathbf{0}
	\end{align}
\end{subequations}

\begin{lemma}
	The optimal solution to problem~\eqref{eq:DCproblem} always satisfies \eqref{eq:DCproblem_relax} with equality, and therefore it is also the optimal solution to problem~\eqref{eq:slack_problem}.	
\end{lemma}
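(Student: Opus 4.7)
I would prove this by contradiction, using the standard argument that if a relaxed inequality is slack at the optimum, the objective can be strictly improved. Concretely, suppose $(\mathbf{y}^\star,\mathbf{z}^\star)$ solves \eqref{eq:DCproblem} and, contrary to the claim, there exists an index $m\in\{1,\ldots,M\}$ with $z_m^\star < (y_m^\star)^{k+1}$. I want to exhibit a feasible point with strictly smaller objective.

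\textbf{Construction and feasibility check.} The perturbation I would use leaves $\mathbf{z}^\star$ untouched and only shrinks the offending component of $\mathbf{y}^\star$. Define $\tilde{y}_m=(z_m^\star)^{1/(k+1)}$ and $\tilde{y}_j=y_j^\star$ for $j\neq m$; this is well-defined for $k\in\{0,1,2\}$ because $z_m^\star\geq 0$. Since $t\mapsto t^{1/(k+1)}$ is strictly increasing on $[0,\infty)$, the strict inequality $z_m^\star<(y_m^\star)^{k+1}$ gives $\tilde{y}_m<y_m^\star$, and non-negativity of $\tilde{\mathbf{y}}$ is immediate. I then verify that $(\tilde{\mathbf{y}},\mathbf{z}^\star)$ is feasible for \eqref{eq:DCproblem}: constraint \eqref{eq:DCproblem_polynomial} is a function of $\mathbf{z}$ alone, so it is preserved; constraint \eqref{eq:DCproblem_relax} at index $m$ becomes $z_m^\star-\tilde{y}_m^{k+1}=0\leq 0$, and the inequalities at indices $j\neq m$ are unchanged.

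\textbf{Contradiction and conclusion.} Since only $\tilde{y}_m$ differs from $y_m^\star$, the objective changes by $\tilde{y}_m-y_m^\star<0$, so $\mathbf{1}^\top\tilde{\mathbf{y}}<\mathbf{1}^\top\mathbf{y}^\star$, contradicting optimality of $(\mathbf{y}^\star,\mathbf{z}^\star)$. Hence every component satisfies $z_m^\star=(y_m^\star)^{k+1}$, i.e.\ the equality constraint \eqref{eq:slack_equality} holds and $(\mathbf{y}^\star,\mathbf{z}^\star)$ is feasible for \eqref{eq:slack_problem}. Finally, since \eqref{eq:DCproblem} is a relaxation of \eqref{eq:slack_problem} (identical objective, larger feasible set), its optimal value lower-bounds that of \eqref{eq:slack_problem}; combined with feasibility of $(\mathbf{y}^\star,\mathbf{z}^\star)$ in \eqref{eq:slack_problem}, the two problems share the same optimum and $(\mathbf{y}^\star,\mathbf{z}^\star)$ solves both.

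\textbf{Expected obstacle.} None of the steps is technically hard, because the key decoupling is that the nonconvex polynomial constraints \eqref{eq:DCproblem_polynomial} involve only $\mathbf{z}$, while the objective involves only $\mathbf{y}$; they interact exclusively through \eqref{eq:DCproblem_relax}. The only point requiring a line of care is the well-posedness of the perturbation when $z_m^\star=0$, which is handled simply by setting $\tilde{y}_m=0$ and observing $\tilde{y}_m<y_m^\star$ still holds (since we assumed $(y_m^\star)^{k+1}>z_m^\star=0$, i.e.\ $y_m^\star>0$).
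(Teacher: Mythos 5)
Your proof is correct and follows essentially the same contradiction argument as the paper: shrink the offending component $y_m$ to $(z_m)^{1/(k+1)}$, note that feasibility is preserved because the polynomial constraints depend only on $\mathbf{z}$, and observe the objective strictly decreases. Your write-up is in fact slightly more complete than the paper's, since you also spell out the final step connecting back to problem \eqref{eq:slack_problem} and handle the edge case $z_m^\star=0$.
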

\begin{proof}
	Assume that is a solution $(\mathbf{y}^{opt},\mathbf{z}^{opt})$ such that for some component $m$ satisfies $z_m^{opt}<(y_m^{opt})^{k+1}$. Then we can define another  point $(\mathbf{y}',\mathbf{z}^{opt})$ such that the $m$th component of $\mathbf{y}'$ is $y'_m=(z_m^{opt})^{\frac{1}{k+1}}<y_m^{opt}$ and the remaining components are $y'_m=y^{opt}_m$. This point satisfies $z_m - y_m^{k+1} \leq 0$ with equality and gives a smaller value for the objective function, contradicting the fact that $(\mathbf{y}^{opt},\mathbf{z}^{opt})$ is a solution.
\end{proof}

Given that problem \eqref{eq:DCproblem}'s constraints are separated into convex and concave functions, we can then convexify the problem by linearizing the concave parts, $-\mathbf{z}^{\top}\mathbf{H}_q^+\mathbf{z}$ in \eqref{eq:DCproblem_polynomial} and $- y_m^{k+1}$ in \eqref{eq:DCproblem_relax} around a point $(\mathbf{y}_{(n)},\mathbf{z}_{(n)})$. Linearization may be implemented by a first order Taylor expansion, where $n$ indexes the iteration. The optimization problem becomes then:
\begin{subequations} \label{eq:convexified}
	\begin{align}
		\min_{\mathbf{y},\mathbf{z}}\quad & \mathbf{1}^\top\mathbf{y} \label{eq:convexified_objective}\\
		\textnormal{s.t.} \quad &
		\left(\mathbf{a}_q+\mathbf{b}_q\right)^\top \mathbf{z} - \mathbf{z}^{\top}\mathbf{H}_q^-\mathbf{z} -\mathbf{z}_{(n)}^{\top}\mathbf{H}_q^+(2\mathbf{z}-\mathbf{z}_{(n)}) \leq 0 \\
		& \qquad\qquad\qquad\qquad\qquad\textnormal{for }q=1,\ldots,Q \label{eq:convexified_polynomial} \\
		& z_m + k y_{(n),m}^{k+1} -(k+1)y_{(n),m}^{k}x_{m} \leq 0 \label{eq:convexified_relax}  \\
		& \qquad\qquad\qquad\qquad\qquad\textrm{for }m=1,\ldots,M  \\
		& \mathbf{y},\mathbf{z}\geq\mathbf{0}
	\end{align}
\end{subequations}
The feasible set of problem~\eqref{eq:convexified} is convex, and in addition, it is a subset of the feasible set of problem~\eqref{eq:DCproblem}. To confirm this point, notice the constraint function in \eqref{eq:convexified_relax} is equal or larger than the constraint function in \eqref{eq:DCproblem_relax} for all $(\mathbf{y},\mathbf{z})$, and therefore, the set of points satisfying constraint \eqref{eq:convexified_relax} is a subset of the one defined by \eqref{eq:DCproblem_relax}. The set of points defined by \eqref{eq:convexified_relax} is also a subset of \eqref{eq:DCproblem_polynomial}. Therefore any solution resulting from solving the approximate problem \eqref{eq:convexified_polynomial}  is in the feasible set of problem \eqref{eq:DCproblem_polynomial}, and consequently of Problem~\ref{prob:unified2}.



\vspace{.8em}\noindent\textbf{Algorithm :} First, depending if we are allocating power (Problem~\ref{prob:power}), bandwidth (Problem~\ref{prob:bandwidth}) or joint power-bandwidth (Problem~\ref{prob:joint}) we set $k$ in \eqref{eq:convexified} to 0, 1 or 2 respectively. The algorithm consists in solving a series of convex problems \eqref{eq:convexified}. The solution $(\mathbf{y}_{(i)},\mathbf{z}_{(i)})$ to \eqref{eq:convexified} at each iteration $i$ is passed to the next iteration $i+1$ and used as a linearization point for \eqref{eq:convexified}. For the initialization step we choose the uniform allocation $\mathbf{y}_{(0)}, \mathbf{z}_{(0)} \propto \mathbf{1}$ as the linearization point because it treats all transmitters equally. The algorithm stops when the value in the cost function \eqref{eq:convexified_objective} does not change substantially. After denoting $\mathbf{y}'$ the solution to \eqref{eq:convexified} in the last iteration, the allocation vector is recovered via \eqref{eq:relation}.

\section{Lower bound on the accuracy of the optimal allocations}
\label{sec:optimality}

The previous section provided an algorithm that finds approximate solutions to Problems~\ref{prob:power} to \ref{prob:joint}. In this section, we provide a method for assessing their quality. We denote optimal solutions $\mathbf{p}_{opt}$, $\mathbf{w}_{opt}$ and $(\mathbf{p}_{opt}^j, \mathbf{w}_{opt}^j)$, and approximate solutions  $\tilde{\mathbf{p}}_{opt}$, $\tilde{\mathbf{w}}_{opt}$ and $(\tilde{\mathbf{p}}_{opt}^j, \tilde{\mathbf{w}}_{opt}^j)$ obtained using the algorithm in the previous section, to Problems~\ref{prob:power}, \ref{prob:bandwidth} and \ref{prob:joint} respectively. Obviously the objective function of Problems~\ref{prob:power}, \ref{prob:bandwidth} and \ref{prob:joint} evaluated at the approximate solutions, are equal or larger than if it they were evaluated at the optimal points:
\begin{gather}
	\max_{q}T_q(\tilde{\mathbf{p}}_{opt},w\mathbf{1}) \geq 	\max_{q}T_q(\mathbf{p}_{opt},w\mathbf{1}) \geq \mathrm{L}_p \label{eq:lb_power}\\
	\max_{q}T_q(p\mathbf{1},\tilde{\mathbf{w}}_{opt}) \geq		\max_{q}T_q(p\mathbf{1},\mathbf{w}_{opt}) \geq \mathrm{L}_b \\
	\max_{q}T_q(\tilde{\mathbf{p}}_{opt}^j,\tilde{\mathbf{w}}_{opt}^j) \geq		 \max_{q}T_q(\mathbf{p}_{opt}^j,\mathbf{w}_{opt}^j) \geq \mathrm{L}_j
\end{gather}
where $\mathrm{L}_p$, $\mathrm{L}_b$ and $\mathrm{L}_j$ are some lower-bounds on the unknown global minimums. This section is devoted to developing these lower-bounds, and their usefulness is explained in the following example. Assume that for Problem~\ref{prob:power} (power allocation) the lower-bound is tight to our approximate minimum $\max_{q}T_q(\tilde{\mathbf{p}}_{opt},w\mathbf{1}) \approx \mathrm{L}_p$, then by \eqref{eq:lb_power} we can conclude that our approximate minimum is very close to the global minimum $\max_{q}T_q(\tilde{\mathbf{p}}_{opt},w\mathbf{1}) \approx \max_{q}T_q(\mathbf{p}_{opt},w \mathbf{1}) \approx \mathrm{L}_p$. However, nothing can be asserted if $\max_{q}T_q(\tilde{\mathbf{p}}_{opt},w \mathbf{1}) \gg \mathrm{L}_p$.

Instead of finding lower-bounds for the global minimum of Problems~\ref{prob:power}, \ref{prob:bandwidth} and \ref{prob:joint} separately, we rely on the following proposition to simplify the process:
\begin{proposition} \label{prop:lower-bounds}
	Let $L_c$ denote a lower-bound to the global minimum of Problem~\ref{prob:unified2}. Then we can obtain lower-bounds, $\mathrm{L}_p$, $\mathrm{L}_b$ and $\mathrm{L}_j$, to the global minimums of Problems~\ref{prob:power}, \ref{prob:bandwidth} and \ref{prob:joint} respectively, through the following equations
	\begin{align} \label{eq:lb_relation}
		L_p &= \left. \frac{L_c}{P w^2}   \right|_{k=0}, &
		L_b &= \left. \frac{L_c^2}{p B^2}  \right|_{k=1}, &
		L_j &= \left. \frac{L_c^3}{P B^2}  \right|_{k=2}
	\end{align}
\end{proposition}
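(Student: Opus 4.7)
The plan is to express the global optimum of each of Problems~\ref{prob:power}--\ref{prob:joint} as an explicit monomial in the global optimum $V_c^*$ of Problem~\ref{prob:unified2}, and then to invoke the inequality $L_c\le V_c^*$. The algebraic backbone has already been assembled in the excerpt: by Proposition~\ref{prop:power_dual} together with Lemma~\ref{lem:active_unified}, Problem~\ref{prob:unified2} and problem~\eqref{eq:unified} differ only by a rescaling of the argument, and by the mapping \eqref{eq:relation} each of the three original allocation problems reduces to \eqref{eq:unified} with an appropriate choice of $k$ and $D$. What remains is to convert these argument scalings into scalings of the objective values.

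First I would record the homogeneity identity
\[
g_q(\alpha\mathbf{y},k)=\alpha^{-(k+1)}\,g_q(\mathbf{y},k),\qquad \alpha>0,
\]
which is visible directly from \eqref{eq:x} since the numerator is homogeneous of degree $k+1$ and the denominator of degree $2k+2$ in $\mathbf{y}$. Combining this with Lemma~\ref{lem:active_unified} gives the following exact value for the optimum of problem~\eqref{eq:unified}: if $\mathbf{y}_c^*$ solves Problem~\ref{prob:unified2} with $V_c^*=\mathbf{1}^{\top}\mathbf{y}_c^*$, then by~\eqref{eq:relation_power} the rescaled vector $(D/V_c^*)\mathbf{y}_c^*$ is optimal for \eqref{eq:unified}, and the optimal value of \eqref{eq:unified} equals $\bigl(V_c^*/D\bigr)^{k+1}$.

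Next, I would strip the fixed prefactors from the objective of each of the three original problems using Lemma~\ref{lem:scale}. A direct computation yields $T_q(\mathbf{p},w\mathbf{1})=w^{-2}g_q(\mathbf{p},0)$, $T_q(p\mathbf{1},\mathbf{w})=p^{-1}g_q(\mathbf{w},1)$, and, after using Proposition~\ref{prop:proportionallity} to substitute $\mathbf{w}=(B/P)\mathbf{p}$, $T_q(\mathbf{p},(B/P)\mathbf{p})=(P/B)^{2}g_q(\mathbf{p},2)$. Multiplying each prefactor by the corresponding optimum $(V_c^*/D)^{k+1}$ of \eqref{eq:unified} produces
\begin{align*}
V_p^* &= \frac{1}{w^2}\cdot\frac{V_c^*}{P} = \frac{V_c^*}{Pw^2},\\
V_b^* &= \frac{1}{p}\cdot\frac{(V_c^*)^2}{B^2} = \frac{(V_c^*)^2}{pB^2},\\
V_j^* &= \frac{P^2}{B^2}\cdot\frac{(V_c^*)^3}{P^3} = \frac{(V_c^*)^3}{PB^2},
\end{align*}
where $V_p^*,V_b^*,V_j^*$ denote the global optima of Problems~\ref{prob:power}--\ref{prob:joint}.

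The three lower bounds now follow by monotonicity: since $V_c^*\ge L_c\ge 0$ we have $(V_c^*)^j\ge L_c^{\,j}$ for $j=1,2,3$, so replacing $V_c^*$ by $L_c$ in the three displays above preserves the inequalities and recovers exactly the expressions for $L_p$, $L_b$, $L_j$ in~\eqref{eq:lb_relation}. There is no substantial obstacle in the argument; the only delicate point is the bookkeeping of exponents in the joint case, where one must correctly combine the $(P/B)^2$ prefactor produced by Proposition~\ref{prop:proportionallity} with the cubic scaling arising from $k=2$.
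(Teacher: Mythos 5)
Your proposal is correct and follows essentially the same route as the paper: both express the global minimum of each original problem exactly as $\left(\mathbf{1}^\top\mathbf{y}'\right)^{k+1}$ times a fixed prefactor, using the scaling/homogeneity property, the mapping \eqref{eq:relation}, and the active-constraint Lemma~\ref{lem:active_CRLB} to set $\max_q g_q(\mathbf{y}',k)=1$, and then substitute $L_c\leq\mathbf{1}^\top\mathbf{y}'$. The only difference is that you carry out all three cases explicitly (correctly handling the $L_c^2$ and $L_c^3$ exponents via nonnegativity), whereas the paper writes out only the power case and omits the other two for brevity.
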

Therefore it suffices to find a lower-bound ($L_c$) for Problem~\ref{prob:unified2}. The following lemma is needed for the proof of Proposition~\ref{prop:lower-bounds}.

\begin{lemma} \label{lem:active_CRLB}
	Constraint \eqref{eq:unified2_polynomials} in  Problem~\ref{prob:unified2} must be active when evaluated at the solution.
\end{lemma}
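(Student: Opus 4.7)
The plan is to argue by contradiction via a scaling trick in the spirit of Lemma~\ref{lem:active_unified}. Suppose, for the sake of contradiction, that at some optimal point $\mathbf{y}^\star$ of Problem~\ref{prob:unified2} every one of the $Q$ constraints \eqref{eq:unified2_polynomials} holds strictly, i.e.\ $\ell_q(\mathbf{y}^\star) < r_q(\mathbf{y}^\star)$ for all $q=1,\ldots,Q$, where I abbreviate the two sides as $\ell_q(\mathbf{y}) = (\mathbf{a}_q+\mathbf{b}_q)^\top(\operatorname{diag}\mathbf{y})^k\mathbf{y}$ and $r_q(\mathbf{y}) = \mathbf{y}^\top(\operatorname{diag}\mathbf{y})^k\mathbf{H}_q(\operatorname{diag}\mathbf{y})^k\mathbf{y}$. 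The goal is to exhibit a strictly better feasible point, forcing a contradiction.

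The next step is to exploit the homogeneity of these expressions. A direct computation, treating each of the three cases $k\in\{0,1,2\}$ uniformly, gives
\begin{equation*}
\ell_q(\alpha\mathbf{y}) = \alpha^{k+1}\ell_q(\mathbf{y}),\qquad r_q(\alpha\mathbf{y}) = \alpha^{2(k+1)}r_q(\mathbf{y}),\qquad \alpha>0.
\end{equation*}
Therefore constraint \eqref{eq:unified2_polynomials} evaluated at the rescaled point $\alpha\mathbf{y}^\star$ reduces, after dividing by $\alpha^{k+1}>0$, to $\ell_q(\mathbf{y}^\star)\leq\alpha^{k+1}r_q(\mathbf{y}^\star)$. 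Defining $\rho := \max_{q:\, r_q(\mathbf{y}^\star)>0} \ell_q(\mathbf{y}^\star)/r_q(\mathbf{y}^\star)$, the strict feasibility hypothesis and the nonnegativity of $\ell_q$ (which follows from \eqref{eq:a}--\eqref{eq:b} together with $\mathbf{y}^\star\geq\mathbf{0}$) yield $\rho\in[0,1)$. Picking $\alpha=\rho^{1/(k+1)}<1$ then produces a nonnegative, feasible point $\alpha\mathbf{y}^\star$ whose objective value $\mathbf{1}^\top(\alpha\mathbf{y}^\star)=\alpha\,\mathbf{1}^\top\mathbf{y}^\star$ is strictly smaller than $\mathbf{1}^\top\mathbf{y}^\star$, contradicting optimality.

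The main subtlety I anticipate is handling indices $q$ for which $r_q(\mathbf{y}^\star)=0$. For such $q$, strict feasibility forces $\ell_q(\mathbf{y}^\star)=0$ as well, and the homogeneity identities then imply $\ell_q(\alpha\mathbf{y}^\star)=r_q(\alpha\mathbf{y}^\star)=0$ for every $\alpha>0$, so the corresponding constraint is automatically satisfied along the scaling and may be excluded from the max defining $\rho$. In the edge case where $r_q(\mathbf{y}^\star)=0$ for all $q$, any $\alpha\in(0,1)$ yields a strictly better feasible point, so the contradiction is even more immediate. Once this bookkeeping is dispatched, the scaling argument forces at least one index to bind \eqref{eq:unified2_polynomials} with equality at the optimum, which is exactly the claim.
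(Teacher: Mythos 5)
Your proof is correct and follows essentially the same route as the paper's: both argue by contradiction, exploit the homogeneity of the two sides of the constraint (degrees $k+1$ and $2(k+1)$), and rescale the putative strictly feasible optimum by $[\max_q g_q]^{1/(k+1)}<1$ to obtain a feasible point with strictly smaller objective. Your extra bookkeeping for $r_q(\mathbf{y}^\star)=0$ is harmless but moot, since $\ell_q\geq 0$ makes strict feasibility impossible there anyway.
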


\begin{proof}[Proof of Lemma~\ref{lem:active_CRLB}]
	Assume there exists a minimum point $\mathbf{y}'$ such that $\max_{q} g_q(\mathbf{y}',k) < 1$. Define $\mathbf{y}^*=[\max_{q} g_q (\mathbf{y}',k)]^\frac{1}{k+1} \mathbf{y}'$. From \eqref{eq:x} we can derive that $\max_q g_q(\mathbf{y}^*,k) =1$, thus it satisfies the constraints of the optimization problem \eqref{eq:x_reformulated}. Next
	\begin{equation}
		\mathbf{1}^\top\mathbf{y}^* =\left[\max_{q\in\{1,\ldots,Q\}} g_q (\mathbf{y}',k)\right]^\frac{1}{k+1} \mathbf{1}^\top\mathbf{y}'
		< \mathbf{1}^\top\mathbf{y}'
	\end{equation}
	This contradicts the assumption that the problem achieves its minimum at $\mathbf{y}'$. It follows that \eqref{eq:x_reformulated} evaluated at the optimal point must be active.
\end{proof}

\begin{proof}[Proof of Proposition~\ref{prop:lower-bounds}]
	The proof is done, only, for Problem~\ref{prob:power}'s lower-bound $L_p$. For Problems \ref{prob:bandwidth} and \ref{prob:joint}, the proof follows the same steps with very minor differences and are omitted here for brevity. Recall that, according to \eqref{eq:relation}, the solution of Problem~\ref{prob:power} is directly related to the solution of Problem~\ref{prob:unified2} by $\mathbf{p}_{opt} = \frac{P}{\mathbf{1}^\top\mathbf{y}'}\mathbf{y}'|_{k=0}$. Substituting $\mathbf{p}_{opt}$ in the global minimum \eqref{eq:lb_power} of Problem~\ref{prob:power} results in
	\begin{equation}
		\max_{q}T_q(\mathbf{p}_{opt},w\mathbf{1}) = \left.\frac{\mathbf{1}^\top\mathbf{y}'}{Pw^2} \max_{q}T_q(\mathbf{y}',\mathbf{1})\right|_{k=0}
	\end{equation}
	where we made used of Lemma~\ref{lem:scale} to simplify it. By Lemma~\ref{lem:active_CRLB} it can be farther reduced to $\max_{q}T_q(\mathbf{p}_{opt},w\mathbf{1})= \frac{\mathbf{1}^\top\mathbf{y}'}{Pw^2}|_{k=0}$. By definition $L_c$ is a lower-bound of the global minimum of Problem~\ref{prob:unified2}, so it must satisfy $L_c\leq \mathbf{1}^\top\mathbf{y}'$, which leads to
	\begin{equation}
		\max_{q}T_q(\mathbf{p}_{opt},w\mathbf{1}) \geq \left.\frac{L_c}{Pw^2} \right|_{k=0}
	\end{equation}
	The right side is obviously a lower-bound to global minimum of Problem~\ref{prob:power} and proves the first equality in Proposition~\ref{prop:lower-bounds}.
\end{proof}

To get a lower-bound $L_c$, we apply a series of relaxations on the feasible set of Problem~\ref{prob:unified2} in order to obtain another optimization problem whose solution can be computed, and whose global minimum is equal or smaller than that of Problem~\ref{prob:unified2}. This global minimum then constitutes a lower bound to the minimum of Problem~\ref{prob:unified2}, which will be denoted by $L_c$. To that end, the first step consists in making a variable vector substitution $z_m = y_m^{k+1}$ for all $m = 1,\ldots,M$ in Problem~\ref{prob:unified2}. Such operation does not change the global minimum of the problem.
\begin{align}
	\min_{\mathbf{z}}\quad & \sum_{m=1}^{M}\sqrt[k+1]{z_m} \label{eq:variable_change} \\
	\textnormal{s.t.} \quad &
	\left(\mathbf{a}_q+\mathbf{b}_q\right)^\top \mathbf{z} \leq 	\mathbf{z}^{\top}\mathbf{H}_q\mathbf{z} \quad\textnormal{for } q=1,\ldots,Q \label{eq:variable_change_poly} \\
	& \mathbf{z}\geq\mathbf{0}
\end{align}
Where $\mathbf{z}=[z_1,\ldots,z_M]^\top$. It is easy to verify that an equal or smaller objective function to that of \eqref{eq:variable_change} for all values of $z_m$ is
\begin{equation} \label{eq:new_objective}
	\sqrt[k+1]{\mathbf{1}^\top\mathbf{z}}
\end{equation}
Let $\mathbf{z}'$ denote the solution to problem \eqref{eq:variable_change_poly} with the new objective function \eqref{eq:new_objective} instead of \eqref{eq:variable_change}. The minimum will be equal or smaller than that of Problem~\ref{prob:unified2}, i.e. $\sqrt[k+1]{\mathbf{1}^\top\mathbf{z}'}\leq \mathbf{\mathbf{1}^\top\mathbf{y}'}$. As the root is a monotonically increasing function, suppressing it from the objective function \eqref{eq:new_objective} still leads to the same solution $\mathbf{z}'$, and therefore we can simplify Problem \eqref{eq:variable_change}-\eqref{eq:variable_change_poly} to
\begin{align}
	\min_{\mathbf{z}}\quad & \mathbf{1}^\top\mathbf{z} \\
	\textnormal{s.t.} \quad &
	\left(\mathbf{a}_q+\mathbf{b}_q\right)^\top \mathbf{z} \leq 	\mathbf{z}^{\top}\mathbf{H}_q\mathbf{z}
	\quad\textnormal{for } q=1,\ldots,Q  \label{eq:Zproblem_poly}\\
	& \mathbf{z}\geq\mathbf{0}
\end{align}
We relax now the feasible set of problem \eqref{eq:Zproblem_poly} by removing $Q-1$ constraints from \eqref{eq:Zproblem_poly}:
\begin{equation} \label{eq:power_st}
	\begin{aligned}
		\min_{\mathbf{z}}\quad & \mathbf{1}^\top\mathbf{z} \\
		\textnormal{s.t.} \quad &
		\left(\mathbf{a}_q+\mathbf{b}_q\right)^\top \mathbf{z} \leq
		\mathbf{z}^{\top}\mathbf{H}_q\mathbf{z}\\
		& \mathbf{z}\geq\mathbf{0}
	\end{aligned}
\end{equation}
Here $q$ takes only one value between 1 and $Q$. Call $\mathbf{z}'_q$ the solution to problem \eqref{eq:power_st}. Because problem \eqref{eq:power_st} is a relaxation of problem \eqref{eq:Zproblem_poly}, its minimum satisfies $\mathbf{1}^\top\mathbf{z}'_q\leq\mathbf{1}^\top\mathbf{z}'$. It turns out that this problem has the same algebraic form as the power allocation problem in \cite{Godrich11} (Section III.A.2), where an exact solution is provided by solving the Karush-Kuhn-Tucker conditions \cite{Bertsekas}. If we solve it for all possible values of $q\in\{1,\ldots,Q\}$, then we can obtain the tighter inequality $\max_q (\mathbf{1}^\top\mathbf{z}'_q) \leq \mathbf{1}^\top\mathbf{z}'$. Putting this together with the fact that $\sqrt[k+1]{\mathbf{1}^\top\mathbf{z}'}\leq \mathbf{\mathbf{1}^\top\mathbf{y}'}$, we obtain the desired computable lower-bound for Problem~\ref{prob:unified2}'s global minimum:
\begin{equation} \label{eq:lower_bound2}
	L_c = \sqrt[k+1]{\max_{q\in\{1,\ldots,Q\}} \mathbf{1}^\top\mathbf{z}'_q}
	\leq \mathbf{1}^\top\mathbf{y}'
\end{equation}

Combining \eqref{eq:lower_bound2} with \eqref{eq:lb_relation}, the final expressions for the lower-bounds of Problems~\ref{prob:power}, \ref{prob:bandwidth} and \ref{prob:joint} are
\begin{align}
	L_p &= \frac{\max_q \mathbf{1}^\top\mathbf{z}'_q}{P w^2}, &
	L_b &= \frac{\max_q \mathbf{1}^\top\mathbf{z}'_q}{p B^2}, &
	L_j &=\frac{\max_q \mathbf{1}^\top\mathbf{z}'_q}{P B^2}
\end{align}
Quite relevant is that if these lower-bounds are tight to the minimums of Problems~\ref{prob:power}, \ref{prob:bandwidth} and \ref{prob:joint}, it suggests that bandwidth has a bigger impact than power because the bandwidth variables $w$ and $B$ appear as quadratic terms in comparison to $p$ and $P$.


\section{Numerical results}
\label{sec:simulations}

The numerical examples presented in this section were obtained with five transmitters, five receivers and four targets. The choice of the number of elements enables sufficient choice for resource allocation, while not making the system overly complex. The total bandwidth available to the network is set to \SI{3}{\mega\hertz}. The average power available for the network is an adjustable parameter. The pulse repetition frequency is set to \SI{5}{\kilo\hertz}, which is sufficient for unambiguous range estimation in our setup. The targets are static. We choose a pulse integration time of \SI{10}{\milli\second}.

The proposed allocation algorithms assign power and/or bandwidth depending on the specific locations of the elements and the reflection coefficients of the targets. To avoid obtaining results that are specific to a particular layout, each point in the figures that are to follow is formed as averaging results of 1000 simulations. For each simulation, the transmitters, receivers and targets are positioned randomly in a \SI{20x20}{\kilo\metre} area, according to a uniform distribution. The reflection coefficients of the targets are drawn from a complex random variable with variance \SI{10}{\square\metre}. Performance was evaluated from the average of 1000 values of the cost function \eqref{eq:true_cost_function}. Each value represents an optimal allocation of power, bandwidth, or joint power-bandwidth for an instantiation of targets locations, reflection coefficients and noise.

\subsection{Resource allocation for different SNR values} \label{sub:simulations_snr}

Fig.~\ref{fig:crlb} presents the square root of the max CRLB as a function of the relative SNR for four different case studies: power and bandwidth evenly distributed among transmitters, power allocation, bandwidth allocation, and joint power-bandwidth allocation. As expected, the joint allocation performs the best decreasing the cost function by 70\% compared to uniform allocation, which has the worst performance. Bandwidth allocation is second best and power allocation is just slightly better than uniform allocation, with decreases in the cost of 50\% and 10\%, respectively. Increasing the SNR improves the localization accuracy for all methods.

\begin{figure}[ptb]
	\includegraphics[width=\columnwidth]{./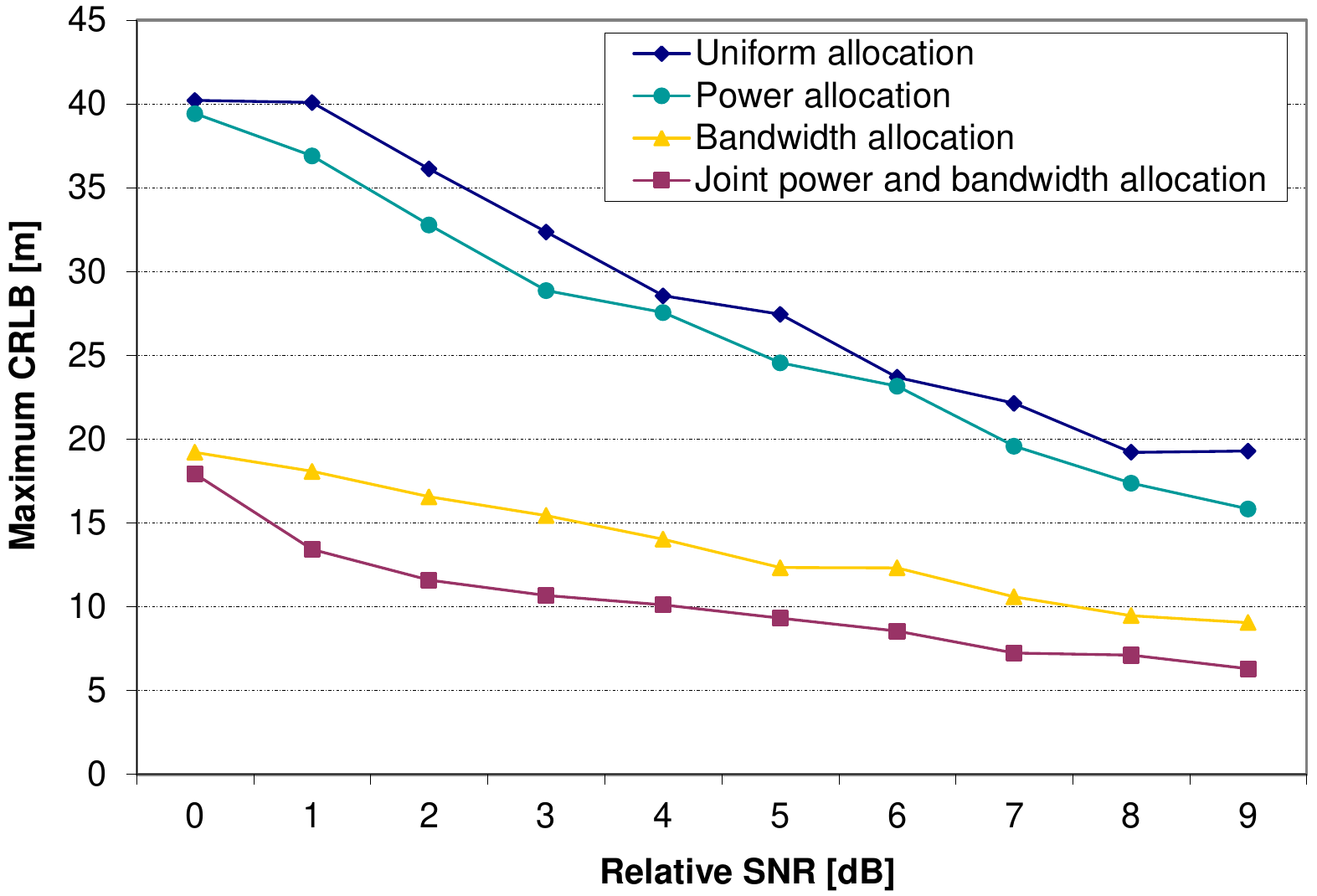}
	\caption{Square root maximum CRLB vs.\ SNR after resource allocation.}
	\label{fig:crlb}
\end{figure}

To validate the results based on the CRLB, localization errors are computed also for a multilateration algorithm implemented to estimate the target locations. Multilateration is comprised of three steps. In the first step, the time of arrivals (TOA's) of the transmitted pulses are estimated at all the receivers. Here, we apply the WRELAX algorithm \cite{Li98} to perform the task. The TOA information is then transmitted to a fusion center, where an algorithm associates TOA's to targets. Finally, using the TOA's, the target locations are estimated using the BLUE method in \cite{Godrich10-2}. For the simulation we chose for pulse shape the square root of the Hamming window in the frequency domain \cite{Oppenheim89}, which can be shown to have a good mainlobe to secondary lobe ratio necessary for TOA estimation. For a given simulation all elements are positioned randomly in the area. After positioning the elements, we run the allocation algorithms, then simulate the signals at the receivers and preform target localization by multilateration. To compare it with the cost function \eqref{eq:true_cost_function}, we save the maximum localization error among all targets and average it for all 1000 instantiations.

\begin{figure}[ptb]
	\includegraphics[width=\columnwidth]{./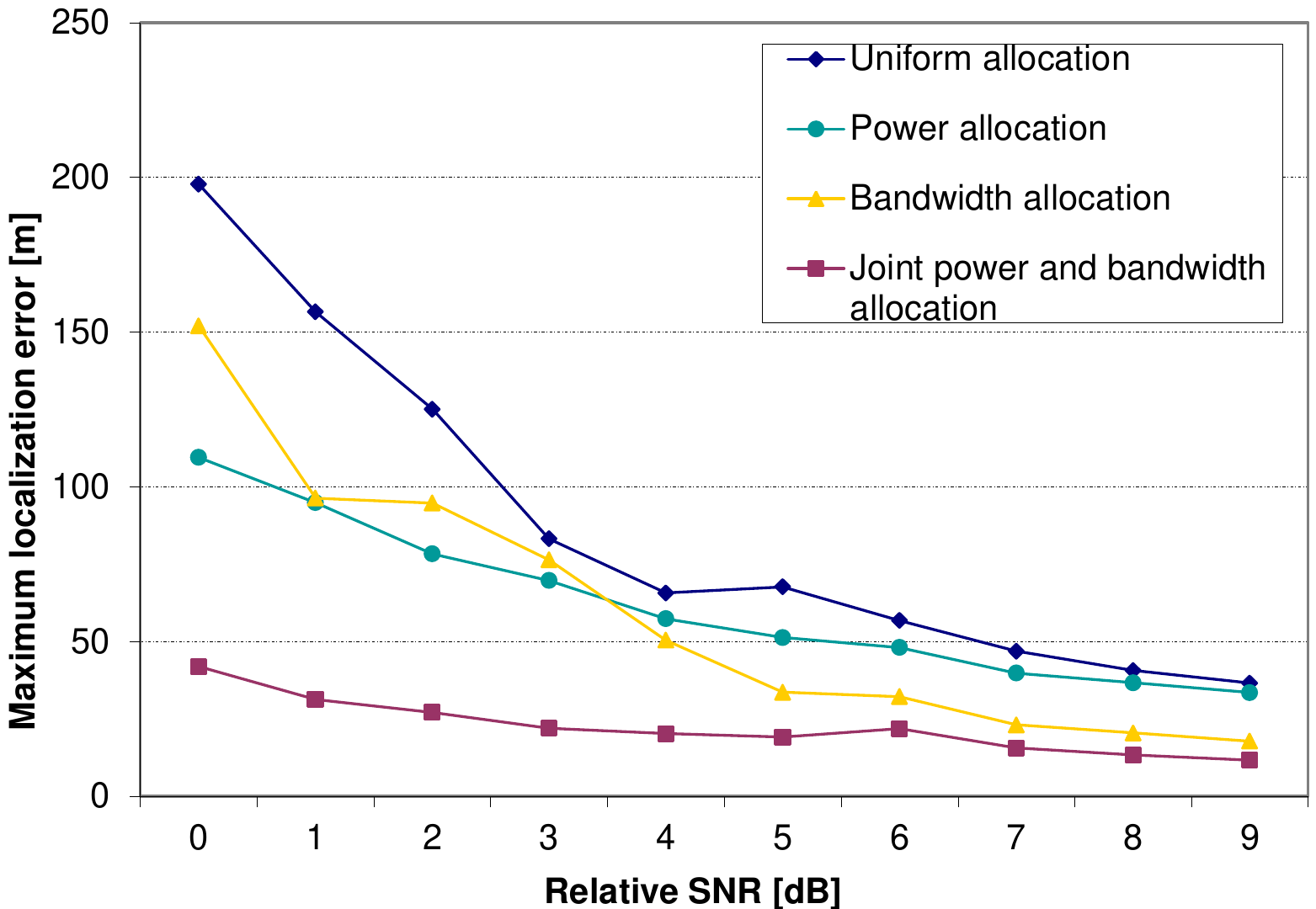}
	\caption{Square root maximum localization error vs.\ SNR after resource allocation.}
	\label{fig:mse}
\end{figure}

Fig.~\ref{fig:mse} plots the square root of the maximum localization error versus SNR. Here an increase of \SI{1}{dB} is simply used to denote an increase of \SI{1}{dB} in total power. Obviously the localization error decreases with the increase in SNR. A threshold effect is observed approximately at around \SI{4}{\decibel} of relative SNR with slight variations for the different allocation algorithms. On the right of this threshold the maximum localization error tightens over the maximum CRLB in Fig.~\ref{fig:crlb}, even though it still maintains a gap for all SNR values. The reason for this gap is, to the best knowledge of the authors, because there do not exist multilateration techniques that converge to the CRLB in the presence of multiple targets.

\subsection{Number of active transmitters}

The resource allocation algorithms distribute among the transmitters power, bandwidth, or both. Fig.~\ref{fig:active} plots the relative frequency of the number of active transmitters (transmitters whose assigned power and bandwidth is different than zero) for the three types of resource allocation. The SNR is not specified because, for all allocations the SNR simply scales the amount of resources to be assigned, but does not change which transmitters are selected.

\begin{figure}[ptb]
	\includegraphics[width=\columnwidth]{./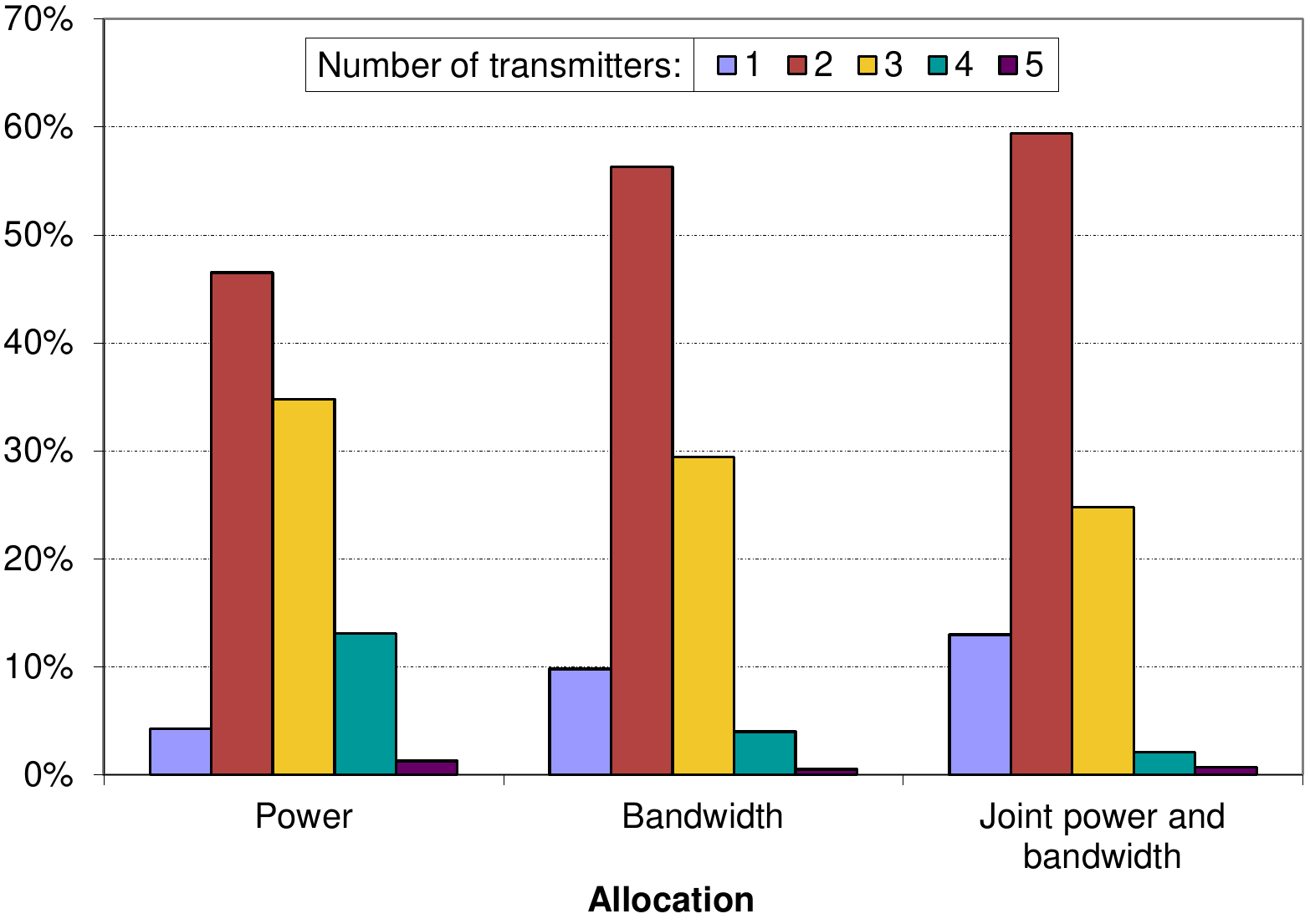}
	\caption{Relative frequencies of the number of active transmitters.}
	\label{fig:active}
\end{figure}

\subsection{Numerical evaluation of the approximations}

The purpose of this section is to validate two approximations used in this paper. The first one is due to the fact that the resource allocation algorithms employ an approximate closed-form formula for the CRLB of multiple targets \eqref{eq:diag_C} as explained in section~\ref{sec:model}. The second approximation was to consider that the solutions of our algorithms are almost as good as the optimal (but unknown) solutions. For this purpose we developed the lower-bounds in Section~\ref{sec:optimality}.

We perform simulations for all three types of resource allocations, and plot in Fig.~\ref{fig:lower-bound}, the maximum CRLB \eqref{eq:true_cost_function}, the approximate maximum CRLB \eqref{eq:cost_function}, and the lower-bound on the optimal CRLB for a prescribed total power. The figure shows how the approximate maximum CRLB and the true maximum CRLB do not vary more than 10\%. It also shows how the lower-bound is tight to the approximate maximum CRLB for the case of power allocation, thus in this case the algorithm is performing optimally. For the bandwidth and joint allocation cases, the lower-bound is not as tight, being the separation greater for the joint case, meaning another joint power-bandwidth allocation algorithm could perhaps perform slightly better.

\begin{figure}[ptb]
	\includegraphics[width=\columnwidth]{./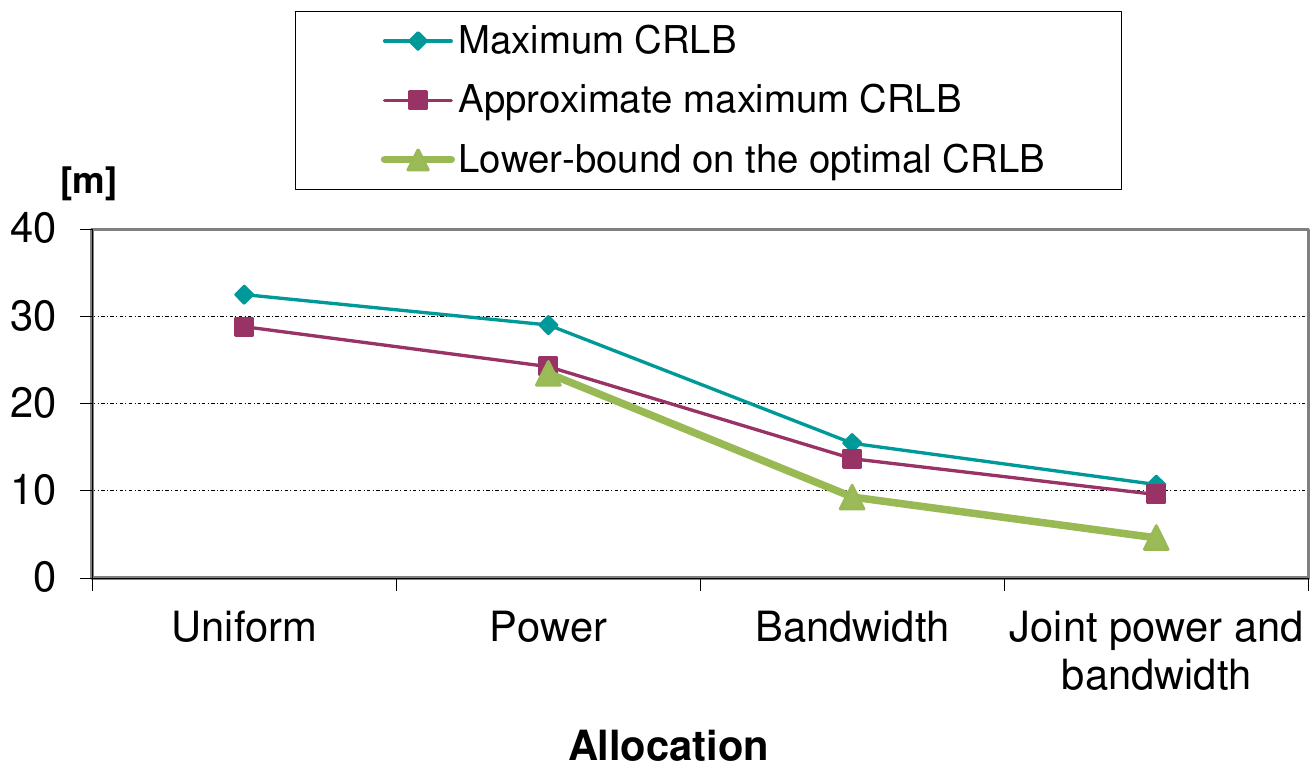}
	\caption{Comparison of the maximum CRLB, the approximate maximum CRLB, and the lower-bounds on the approximate maximum CRLB}
	\label{fig:lower-bound}
\end{figure}


\section{Conclusions}
\label{sec:conclusions}

Three approximate solutions for the allocation of power or/and bandwidth are provided given that transmitters access the medium using disjoint bandwidths of the spectrum. Extensive simulations are run on the performance of resource allocation for different SNR's in terms of the theoretical CRLB and tested by a multilateration algorithm. The best accuracy is achieved by jointly optimizing power and bandwidth allocation among the MIMO radar elements, being bandwidth a much more valuable resource than power.


\appendices

\section{}
\label{appen:proportionallity}

The power allocation $\mathbf{p}_{opt}^j$ and the bandwidth allocation $\mathbf{w}_{opt}^j$ as solutions for Problem~\ref{prob:joint} must satisfy $\mathbf{p}_{opt}^j\leq P$ and $\mathbf{w}_{opt}^j\leq B$, and must be colinear. To prove the latter we define two colinear allocations $\hat{\mathbf{p}}$ and $\hat{\mathbf{w}}$  whose $m$-th components are defined as
\begin{align} \label{eq:colinear}
	\hat{p}(m) &= \frac{P}{\sum\limits_{i=1}^{M} \big(p_{opt}^{j}(i)\big)^{\frac{1}{3}} \big(w_{opt}^{j}(i)\big)^{\frac{2}{3}}} \big(p_{opt}^{j}(m)\big)^{\frac{1}{3}} \big(w_{opt}^{j}(m)\big)^{\frac{2}{3}} \\
	\hat{w}(m) &= \frac{B}{\sum\limits_{i=1}^{M} \big(p_{opt}^{j}(i)\big)^{\frac{1}{3}} \big(w_{opt}^{j}(i)\big)^{\frac{2}{3}}} \big(p_{opt}^{j}(m)\big)^{\frac{1}{3}} \big(w_{opt}^{j}(m)\big)^{\frac{2}{3}}
\end{align}
The cost \eqref{eq:cost_function} associated to these new allocations can be written in terms of the cost function of the previous allocations using \eqref{eq:CRLB_trace}
\begin{equation} \label{eq:cost_comparison}
	g(\hat{\mathbf{p}},\hat{\mathbf{w}}) =
	\frac{\left[\sum\limits_{i=1}^{M} \big(p_{opt}^{j}(i)\big)^{\frac{1}{3}} \big(w_{opt}^{j}(i)\big)^{\frac{2}{3}}\right]^3}{P B^2} g(\mathbf{p}_{opt}^j,\mathbf{w}_{opt}^j)
\end{equation}
By H\"{o}lder's inequality \cite{Hardy34}, and using the fact that $p_{opt}^{j}(i),w_{opt}^{j}(i)\geq 0$, the numerator in the above fraction satisfies
\begin{equation}
	\left[\sum\limits_{i=1}^{M} \big(p_{opt}^{j}(i)\big)^{\frac{1}{3}} \big(w_{opt}^{j}(i)\big)^{\frac{2}{3}}\right]^3 \leq \left[(\mathbf{1}^\top\mathbf{p}_{opt}^j)^{\frac{1}{3}}(\mathbf{1}^\top\mathbf{w}_{opt}^j)^{\frac{2}{3}}\right]^3
\end{equation}
Since the allocations must satisfy $\mathbf{1}^\top\mathbf{p}_{opt}^j\leq P$  and $\mathbf{w}_{opt}^j\leq B$, it follows easily that $\left[(\mathbf{1}^\top\mathbf{p}_{opt}^j)^{\frac{1}{3}}(\mathbf{1}^\top\mathbf{w}_{opt}^j)^{\frac{2}{3}}\right]^3 \allowbreak \leq P B^2$, and therefore \eqref{eq:cost_comparison} reads
\begin{equation}
	g(\hat{\mathbf{p}},\hat{\mathbf{w}}) \leq g(\mathbf{p}_{opt}^j,\mathbf{w}_{opt}^j)
\end{equation}
Thus for any power and bandwidth allocation, it exists always a colinear alternative solution $(\hat{\mathbf{p}},\hat{\mathbf{w}})$ performing equal or better, and such that $\hat{\mathbf{w}} = \frac{B}{P}\hat{\mathbf{p}}$. Hence, the solution $(\mathbf{p}_{opt}^j,\mathbf{w}_{opt}^j)$ {must also satisfy this property, i.e. $\mathbf{w}_{opt}^j =\frac{B}{P}\mathbf{p}_{opt}^j$.

\section{}
\label{appen:equivalent}

The proof of Proposition~\ref{prop:power_dual} is by contradiction. Call $\mathbf{y}'$ the solution to problem \eqref{eq:x_reformulated}. Suppose that problem \eqref{eq:unified} admits a solution $\mathbf{y}^*$ that is better than any scaled copy of $\mathbf{y}'$ satisfying the constraints of problem \eqref{eq:unified}; i.e.\ $\mathbf{y}^*$ better than $\alpha\mathbf{y}'$ for any $\alpha$ such that $0<\alpha < \frac{D}{\mathbf{1}^\top\mathbf{y}'}$. As a better solution, the objective function evaluated at $\mathbf{y}^*$ must be smaller than evaluated at $\alpha\mathbf{y}'$: $\max_{q} g_q (\mathbf{y}^*,k) < \max_{q} g_q (\alpha\mathbf{y}',k)$ for $\alpha\in[0,\frac{D}{\mathbf{1}^\top\mathbf{y}'}]$. For the right side of this inequality, using the definition of $g(\mathbf{y},k)$ \eqref{eq:x}, we can put the constant $\alpha$  as a factor in front of the $\max$ operator: $\max_{q} g_q (\mathbf{y}^*,k) < \frac{1}{\alpha^{k+1}}\max_{q} g_q (\mathbf{y}',k)$. The most limiting value of $\alpha$ is $\alpha = \frac{D}{\mathbf{1}^\top\mathbf{y}'}$, thus leading to
\begin{equation} \label{eq:temp_ineq}
	\max_{q} g_q (\mathbf{y}^*,k) < \left(\frac{\mathbf{1}^\top\mathbf{y}'}{D}\right)^{k+1} \max_{q} g_q (\mathbf{y}',k) \leq \left(\frac{\mathbf{1}^\top\mathbf{y}'}{D}\right)^{k+1}
\end{equation}
where we use the fact that $\mathbf{y}'$ must satisfy the constraints of problem \eqref{eq:x_reformulated}, i.e.\ $\max_{q} g_q (\mathbf{y}',k)\leq 1$. The allocation policy $\mathbf{y}'' = [\max_{q} g_q (\mathbf{y}^*,k)]^{\nicefrac{1}{k+1}} \mathbf{y}^*$ based on $\mathbf{y}^*$, satisfies $g_q (\mathbf{y}'',k) = 1$. Computing the sum of $\mathbf{y}''$'s components and making use of \eqref{eq:temp_ineq} it is obtained
\begin{equation} \label{eq:temp_ineq2}
	\mathbf{1}^\top\mathbf{y}'' = \left[\max_{q} g_q (\mathbf{y}^*,k)\right]^{\nicefrac{1}{k+1}} \mathbf{1}^\top\mathbf{y}^*  < \frac{\mathbf{1}^\top\mathbf{y}'}{D} \mathbf{1}^\top\mathbf{y}^*
\end{equation}
Because $\mathbf{y}^*$ is a solution to problem \eqref{eq:unified} it satisfies $\mathbf{1}^\top\mathbf{y}^*\leq D$, and consequently in \eqref{eq:temp_ineq2},  $\mathbf{1}^\top\mathbf{y}''<  \mathbf{1}^\top\mathbf{y}'$. This indicates that $\mathbf{y}''$, rather than $\mathbf{y}'$, is a solution to problem \eqref{eq:unified} and contradicts the original hypothesis.


\begin{IEEEbiographynophoto}{Nil Garcia}
	received the M.Sc.\ degree in telecommunications engineering from the Universitat Polit\`{e}cnica de Catalunya (UPC) in Barcelona in 2008.
	
	During 2009, he worked as an engineer in the Payload department of the Centre National d'\'{e}tudes Spatiales in Toulouse (CNES), France. He is currently a Ph.D.\ candidate for the New Jersey Institute of Technology (NJIT) in Newark, and also for the National Polytechnic Institute of Toulouse (INPT). His research interests are in MIMO radar and source localization.
\end{IEEEbiographynophoto}

\begin{IEEEbiographynophoto}{Alexander M. Haimovich}
	(S'82--M'87--F'13) received the B.Sc.\ degree in electrical engineering from The Technion–Israeli Institute of Technology, Haifa, in 1977, the M.Sc.\ degree in electrical engineering from Drexel University, Philadelphia, in 1983, and the Ph.D.\ degree in systems from the University of Pennsylvania, Philadelphia, in 1989.
	
	He is the Ying Wu endowed chair professor in the Electrical and Computer Engineering at the New Jersey Institute of Technology (NJIT) in Newark. His research interests include MIMO radar and communication systems, wireless networks, and source localization.
\end{IEEEbiographynophoto}

\begin{IEEEbiographynophoto}{Martial Coulon}
	received the Ingenieur degree in computer science and applied mathematics from Ecole Nationale Supérieure d’Electronique, d'Electrotechnique, d'Informatique et d'Hydraulique et T\'{e}l\'{e}communications, and the M.Sc. degree in signal processing from the National Polytechnic Institute of Toulouse (INPT), France, both in 1996. He received the Ph.D.\ degree from INPT in 1999.
	
	During 2001--2013, he was an Assistant and Associate Professor, respectively, with the University of Toulouse, (INP--ENSEEIHT, Department of Communications Systems), and a member of the IRIT laboratory (CNRS). He is currently a Professor and dean of the communications and networks department. His research activities are centered around
	statistical signal processing and communications systems, with a particular interest to change-point detection, Bayesian inference, spread	spectrum systems, and multiuser detection.	
\end{IEEEbiographynophoto}

\begin{IEEEbiographynophoto}{Marco Lops}
	(M'96--SM'01) received the ``Laurea'' and the Ph.D.\ degrees from ``Federico II'' University, Naples.
	
	During 1989--1991 and 1991--2000, he was an Assistant and Associate Professor, respectively, with ``Federico II'' University. Since March 2000, he has been a Professor at the University of Cassino and during 2009--2010, also with ENSEEIHT, Toulouse, France. In fall 2008, he was a Visiting Professor with the University of Minnesota and in spring 2009, he was at Columbia University, New York, NY. His research interests are in detection and estimation, with emphasis on communications and radar signal processing.
\end{IEEEbiographynophoto}

\end{document}